\let\doendproof\endproof
\renewcommand\endproof{~\hfill\qed\doendproof}
\newcommand{\PLS}{\ensuremath{\texttt{PLS}}\xspace}
\newcommand{\NP}{\ensuremath{\texttt{NP}}\xspace}
\newcommand{\coNP}{\ensuremath{\texttt{coNP}}\xspace}
\newcommand{\plsc}{\PLS-complete\xspace}
\title{Computing stable outcomes in symmetric additively-separable hedonic games%
\thanks{This work was supported by EPSRC grants EP/L011018/1 and EP/J019399/1.
This paper combines results from the two conference papers~\cite{GS10} and~\cite{GS11}.}}
\author{
Martin Gairing
\and Rahul Savani
}
\institute{
Department of Computer Science, University of Liverpool.\\  {\tt
\{gairing,rahul.savani\}@liverpool.ac.uk} 
}
\newtheorem{observation}[theorem]{Observation}
\newcommand{\lref}[1]{Lemma~\ref{#1}}
\newcommand{\tref}[1]{Theorem~\ref{#1}}
\newcommand{\fref}[1]{Figure~\ref{#1}}
\newcommand{\pref}[1]{Proposition~\ref{#1}}
\renewcommand{\k}{\kappa}
\newcommand{\nk}{\overline{\kappa}}
\newcommand{\bO}{\ensuremath{\mathcal{O}}}
\newcommand{\ccr}{COMPUTE regime\xspace}
\newcommand{\crr}{RESET regime\xspace}
\newcommand{\gcr}{FIX GATE regime\xspace}
\newcommand{\grr}{RESET GATE regime\xspace}
\newcommand{\is}{\textsc{IS}\xspace}
\newcommand{\cis}{\textsc{CIS}\xspace}
\newcommand{\wcis}{\textsc{sumCIS}\xspace}
\newcommand{\ns}{\textsc{NashStable}\xspace}
\newcommand{\oens}{\textsc{OneEnemyNashStable}\xspace}
\newcommand{\nins}{\textsc{NashStable*}\xspace}
\newcommand{\pa}{\textsc{PartyAffiliation}\xspace}
\newcommand{\oepa}{\textsc{OneEnemy\-Party\-Affi\-liation}\xspace}
\newcommand{\nioepa}{\textsc{One\-Enemy\-Party\-Affi\-liation*}\xspace}
\newcommand{\lmc}{\textsc{LocalMaxCut}\xspace}
\newcommand{\isTwo}{\textsc{2-IS}\xspace}
\newcommand{\isThree}{\textsc{3-IS}\xspace}
\newcommand{\isFour}{\textsc{4-IS}\xspace}
\newcommand{\votein}{\textsc{VoteIn}\xspace}
\newcommand{\voteout}{\textsc{VoteOut}\xspace}
\newcommand{\voteinout}{\textsc{VoteInOut}\xspace}
\newcommand{\vetoin}{\textsc{VetoIn}\xspace}
\newcommand{\eat}[1]{}
\begin{document}

\maketitle

\begin{abstract}
We study the computational complexity of finding stable outcomes in 
hedonic games, which are a class of coalition formation games.
We restrict our attention to symmetric additively-separable hedonic games,
which are a nontrivial subclass of such games that are guaranteed to possess
stable outcomes.
These games are specified by an 
undirected edge-weighted graph: 
nodes are players, an outcome of the game is a partition of 
the nodes into coalitions, and the utility of a node is 
the sum of  incident edge weights in the same coalition.
We consider several 
stability requirements defined in 
the 
literature. These are based on 
restricting feasible player deviations, for example, 
by giving existing coalition members veto power.
We extend these restrictions by considering more general 
forms of preference aggregation for coalition members. 
In particular, we consider 
voting schemes to decide
if coalition members will allow a player to enter or leave their coalition.
For all of the stability requirements we consider, 
the existence of a stable outcome is guaranteed
by a potential function argument, and local improvements will
converge to a stable outcome.
We provide an almost complete characterization of 
these games in terms of the 
tractability of computing such stable outcomes.
Our findings comprise positive results in the 
form of polynomial-time algorithms,
and negative (\PLS-completeness) results.
The negative results extend to more general hedonic games.
\end{abstract}

%


\section{Introduction}

Hedonic games were introduced in the economics literature as a flexible model of
coalition formation~\cite{DrezeGreenberg80a}.  In a hedonic game, each player
has preferences over coalitions and an outcome of the game is a partition of the
players into coalitions.  The defining feature of a hedonic game is that for a
given outcome each player cares only about the other players in the same
coalition.  It is natural to judge the quality of an outcome by how stable it is
with respect to the players' preferences.  Many different notions of stability
appear in the literature.  The survey by Aziz and Savani~\citep{AS14} gives
detailed background on hedonic games and outlines their applications. 
This paper studies the computational complexity of finding stable outcomes in
hedonic games.

In this paper, we consider and extend the stability requirements for hedonic
games introduced in the seminal work of \citet{BoJa02a}.  An outcome of a
hedonic game is called {\em Nash-stable} if no player prefers to be in a
different coalition.  For Nash stability, the feasibility of a deviation depends
only on the preferences of the deviating player.  Less stringent stability
requirements restrict feasible deviations: a coalition may try to hold on to an
attractive player or block the entry of an unattractive player.  In
\citep{BoJa02a}, deviations are restricted by allowing members of a coalition to
``veto'' the entry or exit of a player.
They introduce {\em individual stability}, where every member of a coalition has
a veto that can prevent a player from joining (deviating to) this coalition,
i.e., a player can deviate to another new coalition only if \emph{everyone} in
this new coalition is happy to have her.
They also introduce {\em contractual individual stability}, where, in addition
to a veto for entering, coalition members have a veto to prevent a player from
leaving the coalition - a player can deviate only if everyone in her coalition
is happy for her to leave.
%

The case where every member of a coalition has a veto on allowing players to 
enter and/or leave the coalition can be seen 
as an extreme form of \emph{voting}.
This motivates the study of more general voting mechanisms for allowing players
to enter and leave coalitions.
In this paper, we consider general voting schemes,
for example, where a player is allowed to join a coalition if 
the majority of existing members would like the player to join.
We also consider other methods of \emph{preference aggregation} 
for coalition members. 
For example, a player is allowed to join a coalition only if the
aggregate utility (i.e., the sum of utilities) existing members have for the
entrant is non-negative.
These preference aggregation methods are also considered in
the context of preventing a player from leaving a coalition.
We study the computational complexity of finding stable outcomes
under stability requirements with  
various restrictions on deviations.

\subsection{The model}

In this paper, we study hedonic games with {\em symmetric additively-separable}
utilities, which allow a succinct representation of the game as an {\it
undirected edge-weighted graph} $G=(V,E,w)$. 
For clarity of our voting definitions,  we assume w.l.o.g.~that $w_e\ne0$ for
all $e \in E$ (an edge with weight 0 can be dropped).
Every node $i \in V$ represents a player.  
An outcome is a partition~$p$ of $V$
into coalitions.  
Denote by $p(i)$ the coalition to which $i\in V$ belongs under $p$, and by
$E(p(i))$ the set of edges $\{\{i,j\}\in E \mid j\in p(i)\}$.

The utility of $i\in V$ under~$p$ is the sum of the weights of edges to others in
the same coalition, i.e., 
$$\sum_{e \in E(p(i))} w(e).$$
Each player
wants to maximize her utility, so a player \emph{wants to deviate} if
there exists a (possibly empty) coalition $c$ where
$$
\sum_{e \in E(p(i))} w(e)<\sum_{\{\{i,j\}\in E\ \mid\ j\in c\}} w(\{i,j\}).
$$
We consider different restrictions on player deviations. They restrict 
when players are allowed to join and/or leave coalitions.
A deviation of player $i$ to coalition $c$ is called
\begin{itemize}
\item
\emph{Nash feasible} if player $i$ wants to deviate to $c$.
\item
 \emph{vote-in feasible}  with threshold $T_{in}$ 
if it is Nash feasible and either 
at least a $T_{in}$ fraction of $i$'s edges to $c$ are positive
or $i$ has no edge to $c$. 
\item
\emph{vote-out feasible}  with threshold $T_{out}$ 
if it is Nash-feasible and 
either at least a $T_{out}$ fraction of $i$'s edges to
$p(i)$ are negative or $i$ has no edges within $p(i)$. 
\item
\emph{sum-in feasible}
if it is Nash feasible and 
\begin{align*}
\sum_{\{\{i,j\}\in E\ \mid\ j\in c\}} w(\{i,j\}) \ge 0.
\end{align*}
\item
\emph{sum-out feasible} if it is Nash feasible and 
\begin{align*}
\sum_{e\in E(p(i))} w(e) \le 0.
\end{align*}
\end{itemize}
%
%
Outcomes where no corresponding feasible deviation is possible are called 
\emph{Nash stable, vote-in stable, vote-out stable, sum-in stable}, and \emph{sum-out stable}, respectively.
Outcomes which are vote-in (resp. vote-out) stable with $T_{in}=1$ (resp. $T_{out}=1$) are also called 
\emph{veto-in} (resp. \emph{veto-out}) \emph{stable}.
Note that an outcome is veto-in stable iff it is an \emph{individual stable} outcome; 
and an outcome is both veto-in and
veto-out stable iff it is a \emph{contractual individual stable} outcome (we use the terms
\emph{individual stable} and \emph{contractual individual stable} since they are commonly
used and known in the economics literature following their definition in~\cite{BoJa02a}).


%
\subsection{An example}

\fref{f:example1} gives an example of an additively-separable symmetric hedonic game
that we use to illustrate some of the stability requirements that we have defined.
Consider the outcome $\{\{a,b,d\},\{c,e,f\}\}$. 
The utilities of the players $a,b,c,d,e,f$ are $10,5,-1,5,1,4$, respectively.

\begin{figure}[h]
\begin{center}
\resizebox*{0.3\textwidth}{!}{\includegraphics{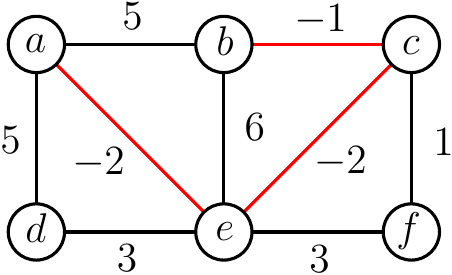}}
\end{center}
\vspace*{-1em}
\caption{An example of an additively-separable symmetric hedonic game.}
\label{f:example1}
\end{figure}

Players $a,b,d,f$ have no Nash-feasible deviations, 
$c$ has a Nash-feasible deviation to go alone
and start a singleton coalition, and $e$ has a Nash-feasible deviation to
join the other coalition.
The deviation of $c$ is not veto-out feasible, since
$f$ prefers $c$ to stay, however it is vote-out feasible
for any $T_{out}\le 0.5$. It is also sum-out feasible.
The deviation of $e$ is not veto-in feasible, but is 
vote-in feasible for any $T_{in}\le 2/3$.
Since there are no deviations that are both veto-in and 
veto-out feasible, this is a contractual individual stable 
outcome.
The outcome $\{\{a,b,d\},\{c\},\{e,f\}\}$ is an individual stable
outcome, and $\{\{a,b,d,e,f\},\{c\}\}$ is Nash stable.

\subsection{Justification of the model}

With the goal of understanding how difficult it is for 
agents to \emph{find} stable outcomes, we 
focus on a model in which they are \emph{guaranteed} to exist. 
The computational complexity of a problem is measured in terms of the
size of its input and therefore depends on the representation of the
problem instance. For games, we desire that the size of the input is
polynomial in the number of players, as this is the natural parameter
with which to measure the size of the game. We consider only such
\emph{succinct representations}, since otherwise 
we can find solutions using trivial
algorithms (enumeration of strategy profiles) 
that are polynomial in the input size.
Our focus on additively-separable games is motivated by
the hardness of even deciding the existence of stable outcomes and 
other solution concepts for more general (universal)
succinct representations, such as hedonic nets~\cite{ElWo09a}.
A \emph{non-symmetric} additively-separable game, 
which is represented by a edge-weighted \emph{directed} graph, 
may not have a Nash-stable outcome~\cite{BoJa02a,Ball04a}, and deciding 
existence is \NP-complete.
We study a more restrictive model where stable outcomes (for all of the stability
requirements we consider) are guaranteed to
exist, noting that our hardness results extend 
to all more general models 
where existence of stable outcomes is either guaranteed or
promised, i.e., instances are
restricted to those possessing stable outcomes.

%
In a \emph{symmetric} additively-separable hedonic game, 
for each of the stability requirements 
we consider, a stable outcome \emph{always exists} by
a simple potential function argument:
the potential function is the total happiness of an outcome, i.e., the sum of players' utilities.
Unilateral player deviations improve the potential.
%
So for all our considered stability requirements, local improvements will find a stable outcome, 
and all the problems we consider are in the complexity class 
\PLS (polynomial local search)~\cite{JPY88}, which we introduce next.

\subsection{Local search and the complexity class \PLS}

Local search is one of few general and successful approaches to difficult
combinatorial optimisation problems. A local search algorithm tries to find an
improved solution in the \emph{neighborhood} of the current solution. A solution
is \emph{locally optimal} if there is no better solution in its neighborhood.
\citet{JPY88} introduced  the complexity class \PLS (polynomial local search) to
capture those local search problems for which a better neighboring solution can
be found in polynomial time if one exists, and a local optimum can be verified
in polynomial time.

A problem in \PLS comprises a finite set of candidate solutions.
Every candidate solution has an associated non-negative integer cost, and a 
neighbourhood of candidate solutions.
In addition, a \PLS problem is specified by the following three polynomial-time 
algorithms that:
\begin{enumerate}
\item construct an initial candidate solution;
\item compute the cost of any candidate solution in polynomial time;
\item given a candidate solution, provide a neighbouring solution with
	lower cost if one exists.
\end{enumerate}
The goal in a \PLS problem is to find a local optimum, that is, a candidate
solution whose cost is no more than the cost of any of its neighbours.

Suppose $A$ and $B$ are problems in \PLS. Then $A$ is \PLS-reducible to $B$ if
there exist polynomial-time computable functions $f$ and $g$ such that $f$ maps
an instance $I$ of $A$ to an instance $f(I)$ of $B$, and $g$ maps the local optima
of instance $f(I)$ of $B$ to local optima of instance $I$. 
A problem in \PLS is \PLS-complete if all problems in \PLS are \PLS-reducible to it.
Prominent \PLS-complete problems include finding a locally optimal max-cut in a graph
(\lmc) \citep{SY91}, or a stable solution in a Hopfield network \citep{JPY88}.
\PLS captures the problem of finding pure Nash equilibria for many classes of
games where pure Nash equilibria are guaranteed to exist, such as congestion
games~\cite{FPT04}, for which is also \plsc to find a pure equilibrium.

On the one hand, finding a locally optimal solution is presumably easier than
finding a global optimum; in fact, it is very unlikely that a \PLS problem is
\NP-hard since this would imply \NP$=$\coNP \citep{JPY88}. On the other hand, a
polynomial-time algorithm for a \PLS-complete problem would resolve  a number of
long open problems, e.g., since it would show that \emph{simple stochastic
games} can be solved in polynomial time~\citep{Yan08}. Thus, \PLS-complete
problems are believed not to admit polynomial-time algorithms.

\subsection{Computational problems}

We define the search problems, \ns, \is (individual stable), \cis (contractual
individual stable), \votein, and \voteout of finding a stable outcome for the
respective stability requirement.  
We introduce \voteinout as the search problem of finding an outcome which is
vote-in and vote-out stable.  
All voting problems are parametrized by $T_{in}$ and/or $T_{out}$.  
We recall that outcomes which are vote-in (resp. vote-out) stable with
$T_{in}=1$ (resp. $T_{out}=1$) are also called \emph{veto-in} (resp.
\emph{veto-out}) \emph{stable}, so \is is the computational problem of finding a
veto-in stable outcome, and \cis is the problem of finding an outcome that is
both veto-in and veto-out stable.
We also introduce \wcis as the problem of finding an outcome which is sum-in and
sum-out stable.

Symmetric additively-separable hedonic games are closely related to 
party affiliation games, which are also specified by an undirected edge-weighted graph.
In a party affiliation game each player must choose between one of 
two ``parties''; a player's happiness is the sum of her edges to nodes 
in the same party; in a stable outcome no player would prefer to be in 
the other party.
The problem \pa is to find a stable outcome in such a game.
If such an instance has only negative edges then it is equivalent to the problem 
\lmc, which is to find a stable outcome of a local max-cut game.
In party affiliation games there are at most two coalitions, while  
in hedonic games any number of coalitions is allowed.
Thus, whereas \pa for instances with only negative 
edges is \PLS-complete~\cite{SY91}, 
\ns is trivial in this case, as the outcome where all players are in singleton coalitions is Nash-stable. 
Both problems are trivial when all edges are non-negative, in which case the 
 grand coalition of all players is Nash-stable.
Thus, interesting hedonic games contain both positive and negative edges.

The problem \oepa is to find a stable outcome of a party
affiliation game where each node is incident to at most one
negative edge.
In this paper, we use a variant of this problem as a starting point for
some of our reductions: 
\begin{definition}
We define the problem \nioepa 
as a restricted version of \oepa which is restricted to instances
where no player is ever indifferent between the two coalitions.
\end{definition}
We show in 
Corollary~\ref{c:oepastar} (page \pageref{c:oepastar}) that 
\nioepa is \PLS-complete. 

\subsection{Our results}

In this paper, we examine the complexity of computing stable outcomes in symmetric 
additively-separable hedonic games. 
We observe that \ns, i.e., the problem of computing a Nash-stable outcome, is 
\PLS-complete (Observation~\ref{o:NashPLS}). 
Here, we give a simple reduction from \pa, which was shown to be \PLS-complete
in~\cite{SY91}.
Our reduction relies on a method to ensure that all stable outcomes use exactly 
two coalitions (where in general there can be as many coalitions as players).

Moreover, we study \is, i.e., the problem of finding an individually-stable outcome.
We show that if the outcome is restricted to contain at most two coalitions, 
an individually-stable outcome can be found in polynomial time (Proposition~\ref{p:is2}).
This suggests that a reduction showing \PLS-hardness for IS cannot be as simple 
as for \ns: one would need to
construct hedonic games that allow three or more coalitions.

In order to prove that \is is \plsc, we first define a restricted version of
\pa, called \oepa, in which each player dislikes at most one other player. Our
main technical result is that \oepa  is \PLS-complete
(Theorem~\ref{t:main_pls}). This reduction is from \textsc{CircuitFlip} and is
rather involved. The instances of \oepa that we produce via this reduction have
the property that no player is ever indifferent between two coalitions. We then
show that such instances can be reduced to \is.

Perhaps surprisingly, given the apparently restrictive nature of the stability
requirement, we show that \wcis is \PLS-complete (Theorem~\ref{thm:wcis}).

In contrast, we show that the problem \cis of finding a
contractually-individually-stable outcome can be solved in polynomial time. We
make explicit two conditions in Propositions~\ref{prop:existsneg}
and~\ref{prop:justgrow}, both met in the case of \cis, that (individually)
guarantee that local improvements converge in polynomial time. We use these
propositions to give further positive results for other combinations of 
restrictions, where either the entering or leaving restriction is veto based.

\begin{figure*}
\begin{center}
\resizebox*{.9\textwidth}{!}{%
\begin{tabular}{|c||c|c|c|c|}
\hline
\multirow{3}{*}{\backslashbox{Leave}{Enter}} 
& 1:
& 2:
& 3:
& 4: 
\\
& \multirow{2}{*}{no restr.} 
& \multirow{2}{*}{sum-in} 
& \multirow{2}{*}{veto-in} 
& \multirow{2}{*}{vote-in} 
\\
& & & & \\
\hline
\hline
A:& \ns  &  & \is & \votein \\
\multirow{2}{*}{no restr.} 
& \PLS-complete & \PLS-complete & \PLS-complete & \PLS-complete \\
& Observation \ref{o:NashPLS} &   Observation \ref{o:NashPLS}      & Theorem~\ref{thm:is} & \tref{thm:votein}\\
\hline
B:&       & \wcis &   &   \\
\multirow{2}{*}{sum-out}
& \plsc & \plsc & P & ? \\
&   \tref{thm:wcis}     & \tref{thm:wcis} & \pref{prop:existsneg} &   \\
\hline
C:&       &       &   \cis &   \\
\multirow{2}{*}{veto-out} 
&   P    &   P    &   P &  P  \\
& \pref{prop:justgrow} & \pref{prop:justgrow} & \pref{prop:existsneg} or \ref{prop:justgrow}& \pref{prop:justgrow} \\
\hline
D:& \voteout           &       &   & \voteinout  \\
\multirow{2}{*}{vote-out} 
& ?  & ?  & P  &  P \tiny{($T_{in},T_{out}> 0.5$)} \\
& (see \tref{thm:voteout}) &  (see \tref{thm:voteout}) & \pref{prop:existsneg} & \tref{thm:voteinout} \\
\hline
\end{tabular}
}
\end{center}
\caption{Table showing the computational complexity of the search problems 
for different entering and leaving deviation restrictions.
Note that columns 1 and 2 are essentially equivalent, since
if a player has a Nash-feasible deviation that results in a negative
payoff, she also has a sum-in feasible (and hence also
Nash-feasible) deviation, namely to form a singleton coalition.
}
\label{f:table}
\end{figure*}

Finally, we study the complexity of finding vote-in and vote-out stable
outcomes. Using a different argument to the polynomial-time cases mentioned
previously, we show that local improvements converge in polynomial time in the
case of vote-in- and vote-out- stability with $T_{in},T_{out}\ge 0.5$ and
$T_{in}+T_{out}>1$ (Theorem~\ref{thm:voteinout}). We show that if we require
vote-in-stability alone, we get a \plsc search problem
(Theorem~\ref{thm:votein}). The problem of finding a vote-out stable outcome is
conceptually different (e.g., we can find a veto-out-stable outcome in polynomial
time, whereas it is \plsc to find a veto-in-stable outcome). The technical
difficulty in proving a hardness result for \voteout is restricting the number
of coalitions. Ultimately, we leave the complexity of \voteout open, but do show
that $k$-\voteout, which is the problem of computing a vote-out stable outcome
when at most $k$ coalitions are allowed, is \plsc (\tref{thm:voteout}). 

Our results are summarized in~\fref{f:table}, which gives an almost complete
characterization of tractability.

\subsection{Related work}

Hedonic coalition formation games were first considered by \citet{DrezeGreenberg80a}.
\citet{Gree94a} later surveyed coalition structures in game theory and
 economics. 
Based on \cite{DrezeGreenberg80a}, 
\citet{BoJa02a} formulated different stability concepts in
the context of hedonic games - see also the survey~\cite{SD07}.
These stability concepts were our motivation to introduce
definitions of stability based on voting and aggregation.
  
The general focus in the game theory community has been on
characterizing the conditions for which stable outcomes
exist. 
%
%
 \citet{BuZw03a} showed that additively-separable and symmetric
preferences guarantee the existence of a Nash-stable outcome. 
They also showed that 
under certain different conditions on the preferences,
the set of Nash-stable outcomes can be empty but the set of individually-stable 
partitions is always non-empty. 

\citet{Ball04a} showed that for hedonic games represented by an
\emph{individually rational list of coalitions}, the complexity of checking
whether core-stable, Nash-stable or individual-stable outcomes exist is
\NP-complete, and that every hedonic game has a
contractually-individually-stable solution.  \citet{SuDi10a} showed
that for additively-separable hedonic games checking whether a core-stable,
strict-core-stable, Nash-stable or individually-stable outcome exists is
\NP-hard. For core-stable and strict-core-stable outcomes those \NP-hardness
results have been extended by  \citet{ABS11} to the case of symmetric player
preferences.  Recently, the paper \cite{PE15} unifies and extends several of
these results by identifying simple conditions on expressivity of hedonic games
that are sufficient for the problem of checking whether a given game admits a
stable outcome to be computationally hard.

\citet{BL09} studied the tradeoff between stability and social welfare in additively-separable hedonic games.
\citet{ElWo09a} characterize the complexity of problems 
related to coalitional stability for hedonic games represented by
 hedonic nets, a succinct, rule-based representation
 based on marginal contribution nets (introduced by~\citet{IS05}).
\citet{Cech08a} surveys algorithmic problems related to stable outcomes.

%
%

The definition of party affiliation games we use appears in~\citet{BBM09}.
Recent work on local max cut and party affiliation games has focused on
approximation~\cite{BCK10,CMS06}; see also \cite{OPS04}.  For surveys on the
computational complexity of local search, see \cite{MDT10,AL97}.
Our \PLS-hardness results use ideas from \citet{Kre89,SY91,MT10}, and in
particular \citet{ET11}. 
We use the \PLS-completeness of \lmc which was shown in~\citet{SY91}.

There is an extensive literature on weighted voting games, which are formally
simple coalitional games.  For such a game, a ``solution'' 
is typically a vector (or set of vectors) of payoffs for the players, rather
than a coalition structure as in our setting; for recent work on computational
problems associated with weighted voting games see~\cite{EP09,EGGW09}. 
\citet{DePa94a} examined the computational complexity of computing solutions for
coalitional games for a model similar to additively-separable hedonic games,
where the game is given by an edge-weighted graph, and the value of a coalition
of nodes is the sum of weights of edges in the corresponding subgraph.
%


\subsection{Outline of the paper}

In Section~\ref{sec:nash}, we show that \ns is \plsc.
In Section~\ref{sec:oepa}, we prove our main technical result: \oepa is 
\plsc.
\oepa is the starting point for our reduction to \is (i.e., \vetoin) in Section~\ref{sec:is}, 
which shows that it too is \plsc.
In Section~\ref{sec:other_veto}, we show that the remaining veto-based problems,
namely all those (except for \is) in row C and column $3$ in Figure~\ref{f:table}, 
can be solved in polynomial time.
In Section~\ref{sec:sum_cis}, we show that the problem \wcis is \plsc.
In Section~\ref{sec:voting}, we give both positive and negative results for
computing stable outcomes under various voting-based stability requirements.
Finally, in Section~\ref{sec:conc}, we conclude with open problems.

\section{Nash stability and restricting the number of coalitions}
\label{sec:nash}

In this section, we show that \ns is \PLS-complete via a reduction from \pa.
Recall that in Nash-stable outcomes for hedonic additive separable games,
players might form more than two coalitions while party affiliation games are
restricted to two coalitions.
To deal with this we use a mechanism, called \emph{supernodes},
which can restrict the number of coalitions that will be non-empty in stable
outcomes.
In the reduction in this section we use two supernodes to restrict to two
coalitions; later in the paper we will use a variable number of $k$ supernodes
to restrict the number of coalitions to $k$.

\begin{observation}
\label{o:NashPLS}
  \ns is \PLS-complete.
\end{observation}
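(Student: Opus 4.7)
The plan is to reduce from \pa, which is \PLS-complete \cite{SY91} (since \lmc is a special case and therefore \pa is at least as hard). Membership in \PLS has already been observed via the ``total happiness'' potential, so the task is to give a \PLS-reduction $f,g$ from \pa to \ns. Given an instance $G=(V,E,w)$ of \pa, $f$ builds a hedonic game $G'$ on vertex set $V\cup\{s_1,s_2\}$, where $s_1,s_2$ are two fresh \emph{supernodes}. All edges of $G$ are retained. A single edge $\{s_1,s_2\}$ is added with a very large negative weight $-W$, and for each $v\in V$ we add edges $\{s_1,v\}$ and $\{s_2,v\}$ with a large positive weight $M$. Choosing $M>2W_{\max}$, where $W_{\max}=\sum_{e\in E}|w(e)|$, and $W>M\cdot(n+2)$ with $n=|V|$, keeps the numeric values polynomial in the input bit-size.

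The core step is a structural lemma showing that in any Nash-stable outcome of $G'$: (i) $s_1$ and $s_2$ lie in different coalitions; and (ii) every $v\in V$ lies in the coalition of $s_1$ or of $s_2$. For (i), if $s_1,s_2$ are together with some set $U\subseteq V$, then $s_1$'s utility is at most $-W+Mn<0$, whereas by deviating to a fresh singleton coalition she obtains $0$; this is a Nash-feasible deviation, contradicting stability. For (ii), if $v\in V$ sits in a coalition containing no supernode, her utility is at most $W_{\max}$, while deviating to the coalition of $s_1$ yields at least $M-W_{\max}>W_{\max}$; again a Nash-feasible improvement exists. Hence every Nash-stable outcome has the form $\{P_1\cup\{s_1\},P_2\cup\{s_2\}\}$ for a partition $(P_1,P_2)$ of $V$.

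Given such an outcome, the $M$-weighted contributions from the supernodes cancel when a node $v\in P_1$ contemplates switching to the other coalition, so the only surviving stability condition on $v$ is $\sum_{u\in P_1}w(\{v,u\})\ge \sum_{u\in P_2}w(\{v,u\})$, which is exactly the \pa stability condition. Moreover, the choice $M>2W_{\max}$ rules out any Nash-feasible deviation of $v$ to form a new singleton or a new coalition. Therefore Nash-stable outcomes of $G'$ correspond bijectively to stable outcomes of the original party affiliation game, and $g$ simply strips $s_1,s_2$ from the partition.

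The main obstacle to watch is designing the supernode gadget so that it truly forces exactly two nonempty coalitions in every \emph{local} optimum, not only in optimal outcomes; this is what the two weight bounds above are calibrated for (one forbids a ``third coalition,'' the other forbids the supernodes from coexisting). The same supernode mechanism, generalised to $k$ supernodes with a complete negative clique among them, will be reused later in the paper to restrict stable outcomes to have at most $k$ coalitions, so it is worth stating the construction in a form that generalises cleanly.
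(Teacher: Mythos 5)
Your proposal is correct and follows essentially the same route as the paper: a reduction from \pa using two supernodes joined by a hugely negative edge, with equal large positive edges from every original node to each supernode, so that every Nash-stable outcome has exactly two coalitions (one per supernode) and the supernode contributions cancel, yielding a one-to-one correspondence with \pa-stable outcomes. The only differences are cosmetic (your constant names $W$ and $M$ are swapped relative to the paper's, and you spell out the structural lemma and bijection in more detail than the paper's brief argument).
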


\begin{proof}
  Consider an instance of \pa, represented as an edge-weighted graph $G=(V,E,w)$. 
  We augment $G$ by introducing two 
  new players, called \emph{supernodes}. 
  Every player $i\in V$ has an edge of weight $W>\sum_{e\in E} |w_e|$ to each of the supernodes. 
  The two supernodes
  are connected by an edge of weight $-M$, where $M> |V|\cdot W$. 
  By the choice of $M$ the two supernodes will be in different coalitions in any Nash-stable outcome of the resulting hedonic game. 
  Moreover, by the choice of~$W$, each player will be in a coalition with one of the supernodes. 
  So, in every Nash-stable outcome we have exactly two coalitions. 
  The fact that edges to supernodes have all the same weight directly implies a one-to-one
  correspondence between the Nash-stable outcomes in the hedonic game and in the party affiliation game.
\end{proof}

\section{Key technical result: \textnormal{\oepa} is \PLS-complete}
\label{sec:oepa}

In this section, we prove our key technical result, that \oepa is \PLS-complete.
The instances that we produce have the useful property that no player is every
indifferent between two coalitions, which we make explicit in
Corollary~\ref{c:oepastar}.
We use these special instances for other reductions in this paper.

The starting point for our reduction to \oepa is the prototypical \plsc problem {\sc
CircuitFlip}, introduced and shown to be \plsc in~\citep{JPY88}.

\begin{definition}
An instance of {\sc CircuitFlip} is a boolean circuit with $n$ inputs and $n$
outputs.  A feasible solution is an assignment to the inputs and the value of a
solution is the output treated as a binary number.  The neighbourhood of an
assignment consists of all assignments obtained by flipping exactly one input
bit.  The objective is to maximise the value.
\end{definition}

%

\begin{theorem}
\label{t:main_pls}
\oepa is \PLS-complete.
\end{theorem}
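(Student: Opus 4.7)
The plan is to reduce from \textsc{CircuitFlip}, adapting the Sch\"affer--Yannakakis style framework (and in particular the ideas of Els\"asser--Tscheuschner) to the severe one-enemy restriction. Given a boolean circuit $C$ with $n$ inputs and $n$ outputs, I would build a party affiliation graph $G_C$ in which every wire of $C$ is represented by a node whose party encodes its boolean value, and every gate of $C$ is simulated by a small gadget on auxiliary nodes. All edge weights are chosen in a geometric hierarchy in which output-encoding edges dominate gate-consistency edges, which in turn dominate input-flip edges. Under this hierarchy, the standard party affiliation potential (the sum of same-party edge weights, which strictly increases at every profitable deviation and places \oepa in \PLS) is, on configurations in which every gadget is consistent with its inputs, equal up to a fixed additive shift to the integer read off the output wires; hence any local optimum of $G_C$ projects to a local optimum of $C$.

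The local-improvement dynamics in $G_C$ will be organised into four regimes \ccr, \crr, \gcr and \grr. After a single input flip, the circuit is re-evaluated gate by gate in topological order: each individual gate gadget is first unlocked by a \gcr move and then re-stabilised by a \grr move; the overall forward sweep through the circuit is the \ccr, and after the output encoding has been updated the \crr cleans up the remaining auxiliary nodes. The weight hierarchy guarantees that from any reachable configuration the only Nash-feasible deviations are the intended regime moves, so every improvement sequence parses into polynomially many regime phases of polynomial length. The reduction maps $f$ (from $C$ to $G_C$) and $g$ (from a local optimum of $G_C$ back to an input assignment of $C$) are then immediate: $g$ reads off the parties of the input nodes, and correctness of $g$ follows because any local optimum must be \emph{clean} (all gate gadgets consistent with their inputs).

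The main obstacle, and in my view the entire technical content of the theorem, is designing the gate gadgets so that every node has at most one incident negative edge while still faithfully realising $\mathrm{AND}$, $\mathrm{OR}$ and $\mathrm{NOT}$. A naive gadget would force at least one internal node to have two or more negative neighbours, in order to express a constraint such as ``disagree with both of these two wires simultaneously''. I would resolve this by replacing every such node by a short pass-through chain of auxiliary nodes joined by heavy positive edges, with each chain member carrying a single negative edge to the outside; the heavy positive edges force the chain to behave as a rigid conductor in all four regimes, so a constraint that would have needed two negative neighbours at a single vertex is distributed across the chain. A case analysis over the three gate types then shows that each such gadget is \emph{correct} (its only stable local configurations are those with $\text{output} = \text{gate}(\text{inputs})$) and \emph{one-enemy}. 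Finally, I would tune the edge weights to be generic so that no player is ever indifferent between the two parties, which is precisely the strengthening needed for Corollary~\ref{c:oepastar}.
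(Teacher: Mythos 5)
Your reduction starts from the right problem (\textsc{CircuitFlip}) and in the right style, but the core architecture is missing, and the argument as stated does not go through. You build a single copy of $C$ and argue that, because the potential agrees with the circuit output on ``clean'' configurations, any local optimum of $G_C$ projects to a local optimum of $C$. That inference fails: a clean configuration whose input is \emph{not} locally optimal for $C$ can still be a local optimum of $G_C$, because realising the better circuit value requires flipping an input \emph{and} re-evaluating every gate downstream --- a long cascade of moves, not a single profitable deviation. This is exactly why the paper's construction (following Sch\"affer--Yannakakis and Els\"asser--Tscheuschner) uses \emph{two} complete copies $C^0,C^1$ of the circuit, a comparator component that looks at both outputs and pushes $d^0$ or $d^1$ according to which circuit holds the worse solution, a feedback component that copies the better neighbouring solution computed by one copy into the inputs of the other, and flag chains $z^\k$ that detect incorrect gates and trigger a reset. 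The regimes you name are, in the paper, static biases installed by auxiliary ``look-at/bias'' subgraphs (themselves circuits embedded in the game), not phases of the improvement dynamics; the paper's correctness argument is a purely static case analysis of Nash-stable outcomes (on the values of $z_1^0$ and $z_1^1$), with no claim about how local search reaches them.

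Second, your claim that ``every improvement sequence parses into polynomially many regime phases of polynomial length'' would be self-defeating: if all improvement paths had polynomial length, the constructed instances would be solvable in polynomial time by local search and could not witness \PLS-hardness. A \PLS-reduction only needs the instance map $f$ and the map $g$ on local optima; nothing about paths is required. Finally, your ``pass-through chain'' idea for the one-enemy constraint is plausible in spirit, but you give no concrete gadget and no case analysis, and in the paper this constraint is absorbed into the design of the gate-checking component itself (the $\alpha$, $\beta$, $\gamma$, $\delta$, $\lambda$ nodes for each fan-in-2 NOR gate); it is a real constraint to track, but it is not the main technical hurdle --- the two-circuit comparator/feedback/reset machinery is.
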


\begin{proof}
We reduce from {\sc CircuitFlip}. Let $C$ be an instance of  {\sc CircuitFlip} with inputs $V_1, \ldots, V_n$, outputs $C_1, \ldots, C_n$, and gates
$G_1, \ldots, G_N$. We make the following simplifying assumptions about $C$: 
(i) The gates are topologically ordered so that if the output of $G_i$ is an input to $G_j$ then $i>j$. 
(ii) All gates are NOR gates with fan-in 2.  
(iii) $G_1, \ldots, G_n$ is the output and $G_{n+1}, \ldots, G_{2n}$ is the (bitwise) negated output of $C$ with $G_1$ and $G_{n+1}$ being the most significant bits.  
(iv) $G_{2n+1}, \dots, G_{3n}$ outputs a (canonical) better neighbouring solution if $V_1, \ldots, V_n$ is not locally optimal. 

We use two complete copies of $C$. One of them represents the current solution while the other ones represents the next (better) solution.
Each copy gives rise to a graph. We will start by describing our construction for one of the two copies and later show how they interact.
Given $C$ construct a graph $G_C$ as follows:

We have nodes $v_1, \ldots, v_n$ representing the inputs of $C$,
and nodes $g_i$ representing the output of the gates of $C$. We will also use $g_i$ to refer to the whole gate.
For $i\in[n]$, denote by $w_i:=g_{2n+i}$ the nodes representing the better neighbouring solution.  
Recall that $g_1, \ldots, g_n$ represent the output of $C$ while $g_{n+1}, \ldots, g_{2n}$ correspond to the negated output.


In our party affiliation game we use $0$ and $1$ to denote the two coalitions. We slightly abuse notation
by using $u=\k$ for $\k\in\{0,1\}$ to denote that node $u$ is in coalition $\k$. 
In the construction, we assume the existence of nodes with a fixed coalition. 
This can be achieved as in the proof of Observation~\ref{o:NashPLS}
with the help of supermodels.
We use $0$ and $1$ to refer to those constant nodes. 
In the graphical representation (cf. \fref{f:mainfig}), we represent those constants by square nodes. 

We follow the exposition of \citet{SY91} and \citet{ET11} and use types to introduce our construction. Nodes may be part of multiple types. 
In general types are ordered w.r.t. decreasing edge weights. So earlier types are more important. Different types will serve different purposes.

\paragraph{Type 1: Check Gates.}
For each gate $g_i$ we have a three-part component as depicted in Figure~\ref{f:gate}. 
The inputs of $g_i$, denoted $I_1(g_i)$ and $I_2(g_i)$, 
are either inputs of the circuit or outputs of some gate with larger index. The main purpose of this component is to check if 
$g_i$ is correct, i.e., $g_i=\neg (I_1(g_i)\wedge I_2(g_i))$, and to set $z_i=1$ if $g_i$ is incorrect.
The $\alpha$, $\beta$, $\gamma$, $\delta$ and $\lambda$ nodes are local nodes for the gate.
A gate can be in two operational modes, called \emph{gate push regimes}. Type 7 will determine 
in which of the following push regimes a gate is.

\begin{definition}[Gate push regimes] 
In the \emph{\grr } $\alpha_{i,1}$, $\alpha_{i,2}$, $\gamma_{i,1}$ and $\gamma_{i,2}$ get a bias towards $1$ while $\lambda_{i,1}, \lambda_{i,2}, \beta_{i,1}, \beta_{i,2}, \beta_{i,3}, \delta_{i,1}, \delta_{i,2}$ and $\gamma_{i,3}$ get a bias towards $0$. In the \emph{\gcr } we have opposite biases. 
\end{definition}
%

\paragraph{Type 2: Propagate Flags.} 
In order to propagate incorrect values for the $z$ variables we interconnect them as in Figure~\ref{f:cata} by using the topological order on the gates.  
Observe that for any locally optimal solution $z_i=1$ enforces $z_j=1$ for all $j<i$. The component is also used to (help to) fix the gates
in order and to RESET them in the opposite order.
Node $z_{N+1}$ is for technical convenience.

\medskip
Type 1 and 2 components are the same for both copies. In the
following we describe how the copies interact. 
We denote the two copies of $C$ by $C^0$ and $C^1$ and also
use superscripts to distinguish between them for nodes of
type 1 and 2.

\paragraph{Type 3: Set/Reset Circuits.}
The component of type 3 interconnects the $z$-flags from the two circuits $C^0, C^1$. 
This component is depicted in Figure \ref{f:type3} and has multiple purposes. 
First, it ensures that in a local optimum $d^0$ and $d^1$ are not  both $1$. 
Second, at the appropriate time, it triggers to reset the circuit with smaller output. 
And third, it locks
$d^0$ or $d^1$ to 1 and resets them back to $0$ at the appropriate times. 

\medskip
The $z$ and $y$ nodes can also be in two different operational modes called \ccr and \crr which is determined by Type 6.
\begin{definition}[Circuit push regimes]
Let $\k\in\{0,1\}$.
In the \emph{\ccr for $z^\k$} all $z^\k_i$ get a bias to $0$ for all $0\le i \le N+1$ and $y^\k$ gets a bias
to~$1$.  In the \emph{\crr for $z^\k$} we give opposite biases.
\end{definition}

\paragraph{Type 4: Check Outputs.}
This component compares the current output of the two circuits and gives incentive to set one of the nodes $d^0$ or $d^1$ to 1 for which the output of the corresponding 
circuit is smaller.
For all $i\in[n]$, we have edges $(d^0, g^0_{n+i}), (d^0, g^1_{i}), (d^1,g^1_{n+i})$, $(d^1, g^0_{i})$ and $(0, g^0_{n+i}), (1, g^1_{i}), (0,g^1_{n+i}), (1, g^0_{i})$ 
of weight $2^{2n+1-i}$. 
To break symmetry we have edges  $(0,d^0), (1,d^1)$ of weight $2^n$. 

\begin{figure*}
\centering
\subfigure[Type 1. Extra factor: $2^{2n+5i}$]{
\label{f:gate}
\resizebox*{0.28\textwidth}{!}{\includegraphics{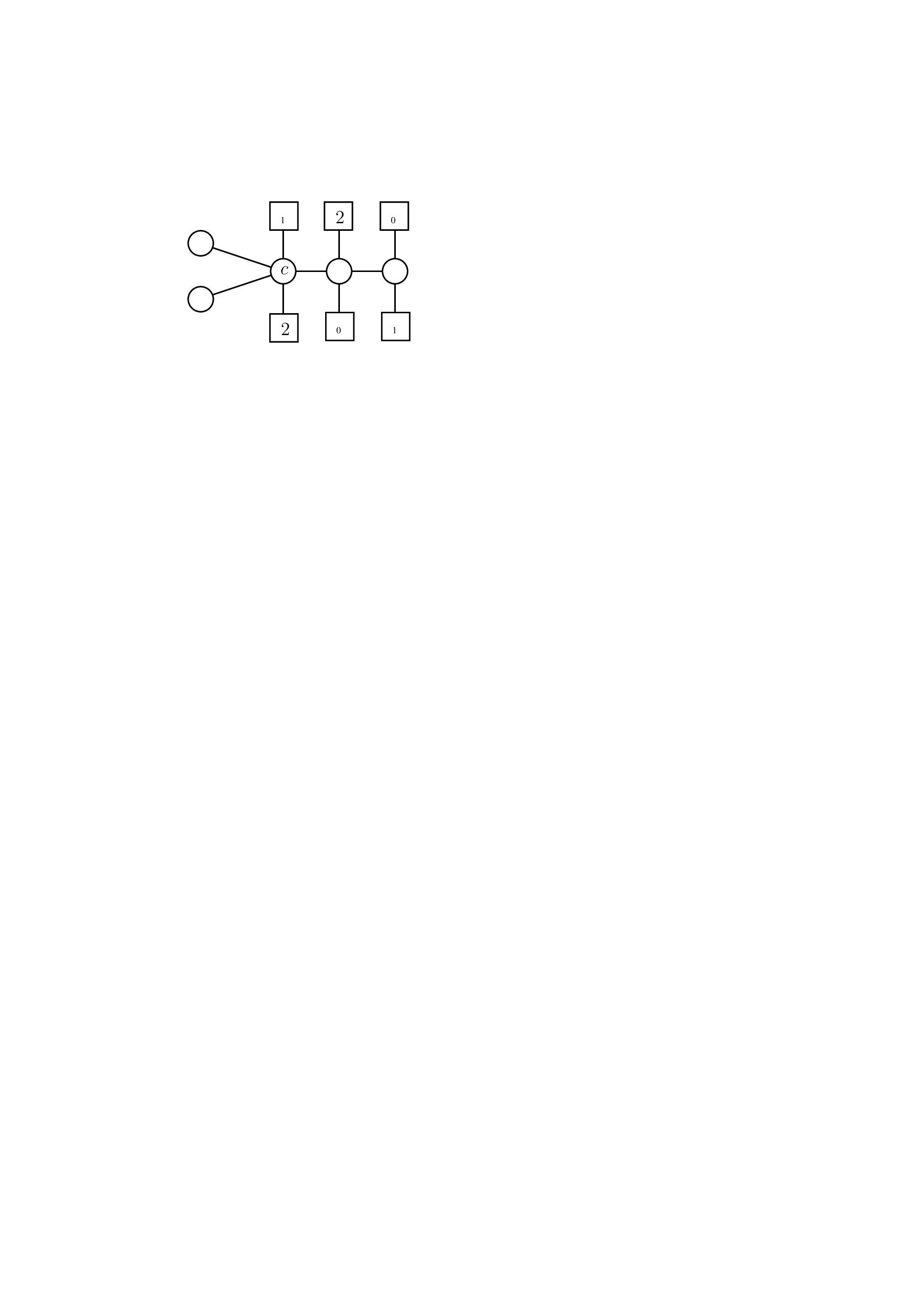}}
\hskip0.5cm
\resizebox*{0.28\textwidth}{!}{\includegraphics{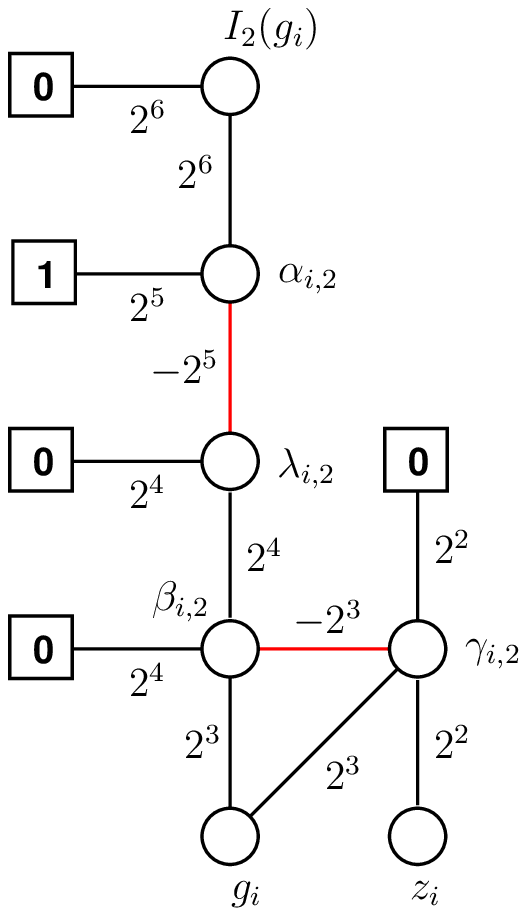}}
\hskip0.5cm
\resizebox*{0.35\textwidth}{!}{\includegraphics{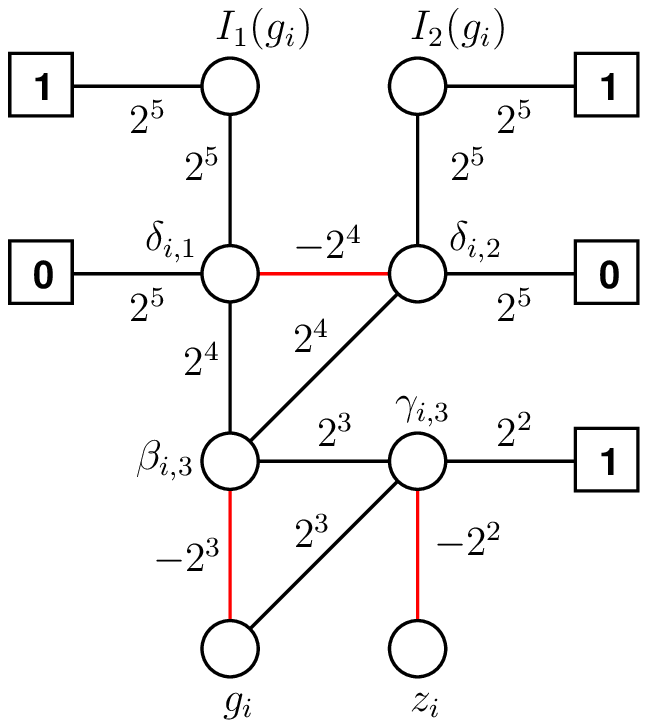}}
}
\subfigure[Type 2. Extra factor: $2^{2n}$]{\resizebox*{\textwidth}{!}{\includegraphics{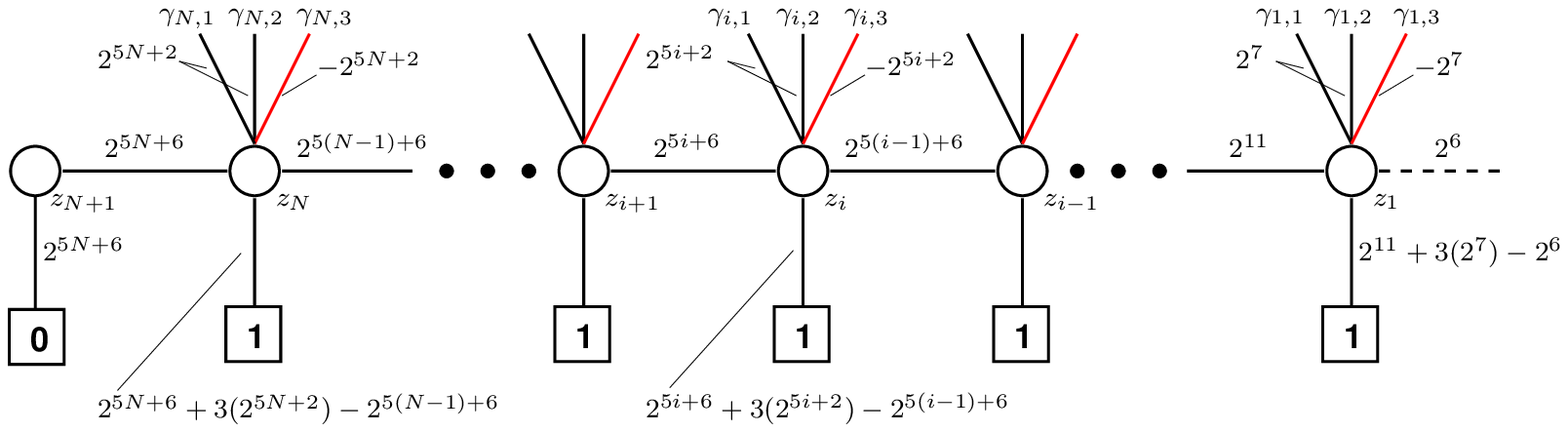}\label{f:cata}}}
\subfigure[Type 3. Extra factor: $2^{2n}$]{\label{f:type3}\resizebox*{0.45\textwidth}{!}{\includegraphics{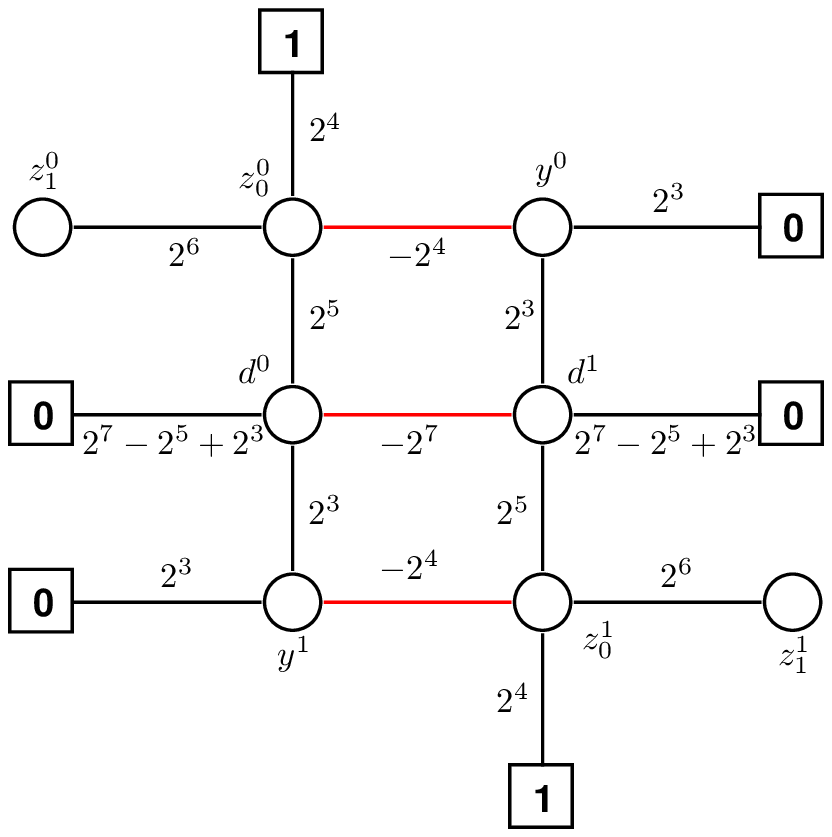}}}
\hskip1.5cm
\subfigure[Type 5. No extra factor]{\label{f:copy}\resizebox*{0.32\textwidth}{!}{\includegraphics{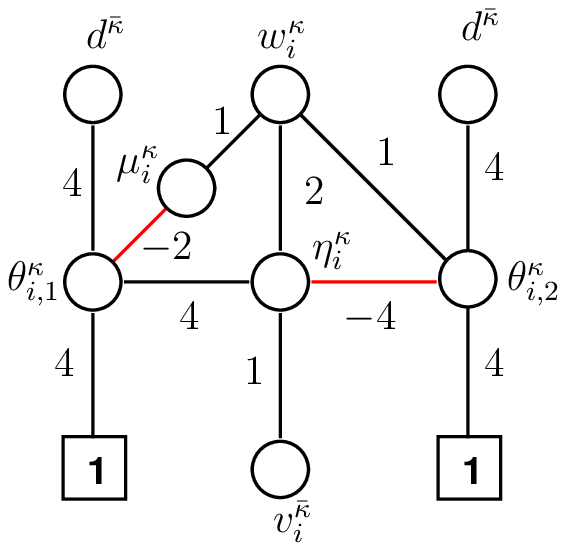}}}
\caption{
Components of type 1,2,3, and 5. Edge weights have to be
multiplied by the factors given above.
\label{f:mainfig}
}
\end{figure*}

\paragraph{Type 5: Feedback Better Solution.}
This component is depicted in Figure \ref{f:copy}. 
It is used to feedback the improving solution of one circuit to the input of the other circuit. 
Its operation is explained in \lref{l:copy_better}.

\medskip
For the remaining types we use 
the following lemma and definition which are analogous to those in \cite{ET11,MT10}.
\begin{lemma}
\label{l:compute_function(old)}
For any  polynomial-time computable function $f:\{0,1\}^k\mapsto \{0,1\}^m$ one can construct a graph $G_f(V_f,E_f, w)$ having the following properties:
(i) there exist $s_1, \ldots, s_k$, $t_1, \dots, t_m \in V_f$ with no negative incident edge,
(ii) each node in $V_f$ is only incident to at most one negative edge, 
(iii) $f(s)=t$ in any Nash-stable solution of the party affiliation game defined by $G_f$. 
\end{lemma}

\begin{proof}
It is well known that for any polynomial computable function $f:\{0,1\}^k\mapsto \{0,1\}^m$
one can construct a circuit $C$ with polynomial many gates that implements this function \cite[Theorem 9.30]{Sip06}. 
Clearly, we can also restrict $C$ to NOR gates with fan-in and fan-out at most 2. 
Organize the gates in levels according to their distance to $C$'s output; output gates are at level 1.   

We replace each gate $g_i$ at level $\ell$ with the gadget below.
Nodes $a,b$ are inputs and $d$ is the output of the gate. 

\begin{figure}[h]
\centering
\resizebox*{0.314\textwidth}{!}{\includegraphics{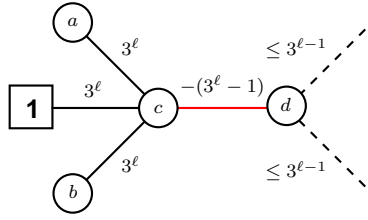}}
\label{f:nor}
\caption{NOR gate}
\end{figure}

If $a$ (or $b$) is an input of the circuit then we connect $a$ to the corresponding input $s$-node by an edge of weight $3^{\ell+1}$.  
If $\ell=1$, i.e., $g_i$ is an output gate, then we connect $d$ to the corresponding output $t$-node with an edge of weight~$1$. 
Otherwise ($\ell>1$), $d$ is also the input to at most $2$ lower level gates. The corresponding edges have weight at most $3^{\ell-1}$.  
In any Nash-stable solution, $d=1$ if and only if $a=b=0$. In 
other words $d=NOR(a,b)$. 
The claim follows since our construction fulfils properties (i), (ii) and (iii).
\end{proof}


\begin{definition}
\label{d:function(old)}
For a  polynomial-time computable function $f:\{0,1\}^k\mapsto \{0,1\}^m$ we say that $G_f$ 
as constructed in \lref{l:compute_function(old)} is a graph that \emph{looks} at $s_1, \dots, s_k \in V_f$ and \emph{biases} $t_1, \dots, t_m \in V_f$
according to the function $f$.
\end{definition}
In the final three types we look at and bias nodes from the lower types already defined. 
For the final types we do not give explicit edge weights. 
In order that the ``looking'' has no side-effects 
on the operation of the lower types, we scale edge weights in these types such that any edge weight of lower type is larger 
than the sum of the edge weights of all higher types. 
More precisely,
for $j\in\{5,6,7\}$,
the weight of the smallest edge of type $j$ is larger than the sum of weights of all edges of types $(j+1), \dots, 8$.

In the following, denote by $C(v)$ the value of circuit $C$ of the \textsc{CircuitFlip} instance on input $v=(v_i)_{i\in [n]}$ and $w(v)$ the better neighbouring solution. 
Both are functions as in Definition \ref{d:function(old)}.

\paragraph{Type 6: Change Push Regimes for z.}
The component of type 6 looks at $v^0$, $v^1$, $d^0$, $d^1$, $\eta^0$ and $\eta^1$ (type 5) and biases $z_i^0, z_i^1, y^0$ and $y^1$ as follows.
$z^0$ is put in the \ccr if at least one of the following 3 conditions is fulfilled: (i) $C(v^0)\ge C(v^1)$, (ii) $w(v^1)=v^0$, or (iii) $w(v^1)\neq \eta^1 \wedge d^0=1$. Else $z^0$ is put into the \crr. 
Likewise 
$z^1$ is put in the \ccr if at least one of the following three conditions is fulfilled: (i) $C(v^0)< C(v^1)$, (ii) $w(v^0)=v^1$, or (iii) $w(v^0)\neq \eta^0 \wedge d^1=1$. Else $z^1$ is put into the \crr. 
Note that conditions (i) and (ii) are important for normal computation, while (iii) is needed to overcome bad starting configurations.

\paragraph{Type 7: Change Push Regimes for Gates.}
For each $i\in [N]$ and $\kappa\in\{0,1\}$, if $z_{i+1}^\kappa=0$ we put the local variable of $g_i^\kappa$ in the \gcr and in the \grr otherwise.
\paragraph{Type 8: Fix Incorrect Gate.}
For each $i\in [N]$ and $\kappa\in\{0,1\}$, the components of type 8 give a tiny offset to $g_i^\kappa$ for computing correctly. 
For each gate $g_i^\kappa$ we look at  $\alpha^{\k}_{i,1},\alpha^{\k}_{i,2}$ and bias  $g_i^\kappa$ to $\neg (\alpha^{\k}_{i,1}\wedge \alpha^{\k}_{i,2})$.

\medskip
This completes our construction. We proceed by showing properties of Nash-stable outcomes. 
Each of the following six lemmas should be read with the implicit clause: 
``\emph{In every Nash-stable outcome.}''

\begin{lemma}
\label{l:copy_better}
Let $\k\in\{0,1\}$, then the following holds for all $i\in[n]$:\\
(a) If $d^{\nk}=0$ then $w_i^{\k}$ is indifferent w.r.t. edges of type 5.  \\
(b)
If $d^{\nk}=1$ then $\eta_i^\k=w_i^\k$.
\end{lemma}

\begin{proof}
Suppose $d^{\nk}=0$. If $\eta^\kappa_{i}=0$ then 
$\theta^\kappa_{i,1}=1$ and $\theta^\kappa_{i,2}=\mu^\kappa_{i}=0$,  
and hence $w_i^{\k}$ is indifferent w.r.t. edges of type 5.
The case $\eta^\kappa_{i}=1$ is symmetric.
This proves~(a).
If $d^{\nk}=1$ then $\theta^\kappa_{i,1}=\theta^\kappa_{i,2}=1$, so  
$\eta_i^\k$ is indifferent w.r.t. the edges connecting it to 
$\theta^\kappa_{i,1}$ and $\theta^\kappa_{i,2}=1$. 
Hence $\eta_i^\k$ will copy the value of~$w_i^\k$, which proves (b).
\end{proof}

\begin{lemma}
\label{l:gate_incorrect}
If $g^\k_i$ is incorrect then $z_i^\k=1$. 
If $z_i^\k=1$ then $z_j^\k=1$ for all $0\le j\le i$ and $y^\k=0$.
\end{lemma}

\begin{proof}
Gate $g^\k_i$ can be incorrect in two ways:
\begin{itemize}
\item[(i)] $I_j(g_i^\k)=1$ for some $j\in\{1,2\}$ and $g^\k_i=1$,
\item[(ii)] $I_1(g_i^\k)=I_2(g_i^\k)=0$ and $g^\k_i=0$.  
\end{itemize}
For case (i) observe that $I_j(g_i^\k)=1 \implies \alpha^\k_{i,j}=1 \implies \lambda_{i,j}=0 \implies \beta^\k_{i,j}=0$. Together with $g^\k_i=1$ this directly implies $\gamma^\k_{i,j}=1$.
Consider now case (ii).  Since $I_1(g_i^\k)=I_2(g_i^\k)=0$ we have $\delta^\k_{i,1}=\delta^\k_{i,2}=0$ and therefore $\beta^\k_{i,3}=0$. Together with $g_i^\k=0$ this directly implies $\gamma^\k_{i,3}=0$.
In either case, this implies $z_i^\k=1$, proving the first part of the lemma.

The second claim holds by induction, since $z_i^\k=1$ enforces $z_{i-1}^\k=1$, while $z_0^\k=1$ enforces $y^\k=0$. 
\end{proof}

\begin{lemma}
\label{l:inputs_indifferent}
If $z_{i+1}^\k=1$ then the inputs $I_1(g_i^\k)$ and  $I_2(g_i^\k)$ are indifferent with
respect to the type 1 edges of gate $g_i^\k$.
\end{lemma}

\begin{proof}
We show that $\alpha^\k_{i,1}=\alpha^\k_{i,2}=1$ and $\delta^\k_{i,1}=\delta^\k_{i,2}=0$, which implies the claim. 
According to type 7, since $z_{i+1}^\k=1$, gate $i$ is in the \grr.

We first show that in this regime, we must have $\alpha^\k_{i,1}=\alpha^\k_{i,2}=1$.
It is immediate that if $I_j(g_i^{\k})=1$ then $\alpha^\k_{i,j}=1$ for $j\in\{1,2\}$.
We now show that $I_j(g_i^{\k})=0$ implies $\alpha^\k_{i,j}=1$ for $j\in\{1,2\}$.
Suppose, without loss of generality, that $j=1$. Suppose $I_1(g_i^{\k})=0$ and for the sake of contradiction that $\alpha^\k_{i,1}=0$. Since $\alpha^\k_{i,1}$ is biased to $1$ it can only be $0$ if $\lambda^\k_{i,1}=1$. 
Since $\lambda^\k_{i,1}$ is biased to $0$, it can only be $1$ if $\beta^\k_{i,1}=1$.
Since $\beta^\k_{i,1}$ is biased to $0$, it can only be $1$ if $g_i=1$ and $\gamma^\k_{i,1}=0$.
However, since $\beta^\k_{i,1}=1$ and $g_i^\k=1$ and $\gamma^\k_{i,1}$ is biased to $1$, we have $\gamma^\k_{i,1}=1$.
Thus $\alpha^\k_{i,1}=1$.

We are left to show that in the \grr we must have $\delta^\k_{i,1}=\delta^\k_{i,2}=0$.
If $(I_1(g_i^\k),I_2(g_i^\k))=(0,0)$, it is immediate that $\delta^\k_{i,1}=\delta^\k_{i,2}=0$.
Suppose $(I_1(g_i^\k),I_2(g_i^\k))=(0,1)$. Then it is immediate that $\delta^\k_{i,1}=0$, and then since $\delta^\k_{i,2}$ is biased to $0$, it must also be $0$.  
The case $(I_1(g_i^\k),I_2(g_i^\k))=(1,0)$ is symmetric.
Finally, suppose $(I_1(g_i^\k),I_2(g_i^\k))=(1,1)$.
If $\delta^\k_{i,1}=\delta^\k_{i,2}=1$ then it is immediate that $\beta^\k_{i,3}=1$.
Then $\delta^\k_{i,1}$ and $\delta^\k_{i,2}$ are both indifferent to the edges of type 1, but they are not stable since they are biased to $0$.
Suppose $\delta^\k_{i,1}=1$ and $\delta^\k_{i,2}=0$.
If $\beta^\k_{i,3}=0$ then $\delta^\k_{i,1}$ is indifferent to the edges of type 1, but is biased to $0$ and hence is not stable.
If $\beta^\k_{i,3}=1$, then since $\beta^\k_{i,3}$ is biased to $0$, we must have $g_i=0$ and $\gamma^\k_{i,3}=1$.
But then $\gamma^\k_{i,3}$ is not stable since it is biased to $0$. 
The case $\delta^\k_{i,1}=0$ and $\delta^\k_{i,2}=1$ is symmetric.
%
\end{proof}

\begin{lemma}
\label{l:last_correct_gate}
Suppose $z_{i+1}^\k=0$ and $z_i^\k=1$ for some index $1\le i \le N$. \\
(a)
If $g_i^\k$ is correct then $\gamma^\k_{i,1}=\gamma^\k_{i,2}=0$ and $\gamma^\k_{i,3}=1$.\\
(b)
If $g_i^\k$ is not correct then $g_i^\k$ is indifferent w.r.t. edges of type 1 
but w.r.t. the edges only in type 8 deviating would improve her happiness.
\end{lemma}

\begin{proof}
According to type 7, since $z^\k_{i+1}=0$, gate $i$ is in the \gcr.
Thus  $\gamma^\k_{i,1}$ and $\gamma^\k_{i,2}$ are biased to $0$.
First suppose the gate is correct. 

If the correct output is $0$ then we have 
$g_i^\k=0$, and $z_i^\k=1$ by assumption.
Then $\gamma^\k_{i,1}$ and $\gamma^\k_{i,2}$ either prefer $0$ or are indifferent w.r.t. the edges in type 1 (depending on the values of $\beta^\k_{i,1}$ and $\beta^\k_{i,2}$). 
As they are biased to $0$ they will be $0$.
Suppose that $\gamma^\k_{i,3}=0$ for the sake of contradiction.
Then since it is biased to $1$, we must have $\beta^\k_{i,3}=0$.
Since $\beta^\k_{i,3}$ is biased to $1$, we must have 
$\delta^\k_{i,1}=\delta^\k_{i,2}=0$.
However, as the correct output is $0$, at least one of the input bits must be $1$. 
Suppose w.l.o.g. that $I_1(g_i^\k)=1$.
Then, since $\delta^\k_{i,1}$ is indifferent w.r.t. the edges in type 1 and is biased to $1$, it must be $1$, a contradiction.

Now suppose the correct output is $1$.
Thus we have $g_i^\k=1$, and $z_i^\k=1$ by assumption.
Then $\gamma^\k_{i,3}$ either prefers $1$ or is indifferent w.r.t. the edges in type 1 (depending on the value of $\beta^\k_{i,3})$. 
Since $\gamma^\k_{i,3}$ is biased to $1$, it will be $1$.
Suppose that $\gamma^\k_{i,1}=1$ for the sake of contradiction.
Since $\gamma^\k_{i,1}$ is biased to $0$ it can only be $1$ if $\beta^\k_{i,1}=0$.
Since $\beta^\k_{i,1}$ is biased to $1$ is can only be $0$ if  $\lambda^\k_{i,1}=0$.
Since $\lambda^\k_{i,1}$ is biased to $1$ it can only be $0$ if $\alpha^\k_{i,1}=1$.
Since the output is $1$ the input $I_1(g_i^\k)=0$, and then since $\alpha^\k_{i,1}$ is indifferent 
w.r.t. the edges in type 1 and is biased to $0$, it must be $0$, a contradiction.  
The same reasoning applies for $\gamma^\k_{i,2}$.
This completes the proof of (a).

Now suppose the output is incorrect.
Note that $g_i$ is indifferent w.r.t. the edges of type 1 if and only if 
$\beta^\k_{i,1}\ne\gamma^\k_{i,1}$ and $\beta^\k_{i,2}\ne\gamma^\k_{i,2}$ and $\beta^\k_{i,3}=\gamma^\k_{i,3}$.

First suppose the output is $0$.
Thus we have $g_i^\k=0$, and $z_i^\k=1$ by assumption.
Since the output is $0$ and incorrect, we have $I_1(g_i^\k)=I_2(g_i^\k)=0$, 

Suppose $\beta^\k_{i,1}=\gamma^\k_{i,1}=1$.
Then $\gamma^\k_{i,1}$ is indifferent 
w.r.t. the edges in type 1 and is biased to $0$, a contradiction. 
Now suppose $\beta^\k_{i,1}=\gamma^\k_{i,1}=0$.
Since $\beta^\k_{i,1}$ is biased to $1$, we have $\lambda^\k_{i,1}=0$.
Since $\lambda^\k_{i,1}$ is biased to $0$, we have $\alpha^\k_{i,1}=1$.
But $\alpha^\k_{i,1}$ is indifferent w.r.t. the edges of type 1 and biased to $0$, a contradiction.
Thus $\beta^\k_{i,1}\ne\gamma^\k_{i,1}$ and likewise $\beta^\k_{i,2}\ne\gamma^\k_{i,2}$.

Now suppose $\beta^\k_{i,3}=1$ and $\gamma^\k_{i,3}=0$.
Then $\gamma^\k_{i,3}$ is indifferent w.r.t. the edges in type 1 and biased to $1$, a contradiction.
Now suppose $\beta^\k_{i,3}=0$ and $\gamma^\k_{i,3}=1$.
Then $\gamma^\k_{i,3}$ prefers to be $0$ than $1$, a contradiction.
We have shown that $g_i^\k$ is indifferent w.r.t. edges of type 1 when 
the output is $0$ and incorrect.

Since $I_1(g_i^\k)=I_2(g_i^\k)=0$, and $\alpha^\k_{i,1}$ and $\alpha^\k_{i,2}$ are biased to $0$, we 
have  $\alpha^\k_{i,1}=\alpha^\k_{i,2}=0$.
Thus type 8 biases $g_i^\k$ to $1$, and it would gain by flipping as claimed.   

Now suppose the output is $1$.
Thus we have $g_i^\k=1$, and $z_i^\k=1$ by assumption.
%
Suppose $\beta^\k_{i,1}=\gamma^\k_{i,1}=0$.
Then $\gamma^\k_{i,1}$ prefers to be $1$ than $0$, a contradiction.
Suppose $\beta^\k_{i,1}=\gamma^\k_{i,1}=1$.
Then $\gamma^\k_{i,1}$ is indifferent w.r.t. edges of type 1 and is biased to $0$, a contradiction.
Thus $\beta^\k_{i,1}\ne\gamma^\k_{i,1}$ and likewise $\beta^\k_{i,2}\ne\gamma^\k_{i,2}$.

Now suppose $\beta^\k_{i,3}=1$ and $\gamma^\k_{i,3}=0$.
Then $\gamma^\k_{i,3}$ prefers to be $1$ than $0$, a contradiction.
Now suppose $\beta^\k_{i,3}=0$ and $\gamma^\k_{i,3}=1$.
Since $\beta^\k_{i,3}$ is biased to $1$, we must have 
$\delta^\k_{i,1}=\delta^\k_{i,2}=0$.
Since the output is $1$ and incorrect, we have at least one of $I_1(g_i^\k)$ and $I_2(g_i^\k)$ equal to $1$.
Suppose w.l.o.g. that $I_1(g_i^\k)=1$.
Then, since $\delta^\k_{i,1}$ is indifferent w.r.t. the edges in type 1 and is biased to $1$, it must be $1$, a contradiction.
We have shown that $g_i^\k$ is indifferent w.r.t. edges of type 1 when 
the output is $1$ and incorrect.

At least one of  $I_1(g_i^\k)$ and $I_2(g_i^\k)$ are $1$.
Suppose w.l.o.g. that $I_1(g_i^\k)=1$.
Then $\alpha^\k_{i,1}=1$.
Thus type 8 biases $g_i^\k$ to $0$, and it would gain by flipping as claimed.   
This completes the proof of (b).
\end{proof}

\begin{lemma}
\label{l:outputs_indiff_comparator}
If $d^\k=1$ and $d^{\nk}=0$ then for all $1\le i\le 2n$, node $g_i^\k$ is indifferent w.r.t. edges in type 4.
\end{lemma}

\begin{proof}
Each of these nodes is incident to exactly two type 4 edges both having the same weight.
For $1\le i\le n$ these are $(1, g_i^\k)$ and $(d^{\nk}, g_i^\k)$, while for   $n+1\le i\le 2n$ these are $(0, g_i^\k)$ and $(d^{\k}, g_i^\k)$.
The claim follows since $d^\k=1$ and $d^{\nk}=0$.
\end{proof}

\begin{lemma}
\label{l:allgatesfixed}
Suppose $d^\k=1$ and $d^{\nk}=0$. 
\begin{itemize}
\item[(a)] If $z^\k$ is in the \ccr then $z_i^\k=0$ for all $0\le i \le N+1$ and $y^\k=1.$
\item[(b)] If $z^\k$ is in the \crr then $z_i^\k=1$ for all $0\le i \le N+1$ and $y^\k=0.$
\end{itemize}
\end{lemma}
%
%
\begin{proof}
We start proving part (a). Since $z^\k$ is in the \ccr we immediately get $z_{N+1}^\k=0$. 
Assume, by way of contradiction, that there exists an index $1\le i\le N$ such that 
$z_i^\k=1$ and $z_{i+1}^\k=0$. 
First assume $g_i^\k$ is correct. 
Then $\gamma_{i,1}^\k=\gamma_{i,2}^\k=0$ and $\gamma_{i,3}^\k=1$ by \lref{l:last_correct_gate}(a). 
This and the fact that $z_{i+1}^\k$ is biased to $0$ implies that $z_i^\k=0$, a contradiction.
Now assume $g_i^\k$ is not correct. 
Then, by \lref{l:last_correct_gate}(b),  $g_i^\k$  is indifferent w.r.t. edges of type 1  
but w.r.t. the edges only in type 8 flipping would improve her happiness. If $3n+1\le i\le N$ 
this already implies that $g_i^\k$ would gain by switching, a contradiction. 
For the outputs and negated outputs, i.e., $1\le i\le 2n$, we know by \lref{l:outputs_indiff_comparator} 
that $g_i^\k$ is indifferent w.r.t. edges in type 4.
Moreover for the gates that represent the better neighbouring solution, i.e. $2n+1 \le i \le 3n$, 
we know by \lref{l:copy_better}(a) that $g_i^\k$ is indifferent w.r.t. edges in type 5.
(Recall that $w_i^\k$ is just another name for $g_{2n+i}^\k$.)
In either case, $g_i^\k$ would gain by switching, a contradiction. 
Thus, $z_i^\k=0$ for all $1\le i\le N$. It remains to show that $z_0^\k=0$ and $y^\k=1$.
Since $z_1^\k=0$ and $z^\k$ is in the \ccr, we get (by inspection of type 3 edges) that 
$z_0^\k=0$ which then implies $y^\k=1$.
This completes the proof of part (a).

To see part (b), observe that $d^{\nk}=0$ together with the bias of $y^\k$ to $0$ implies $y^\k=0$.
Now since $y^\k=0$ and $d^\k=1$ the bias of $z^\k_0$ enforces $z^\k_0=1$. The rest is by induction
since $z^\k_i=1$ and the bias directly implies $z^\k_{i+1}=1$ for all $0\le i \le N$. This completes the proof of part (b).
\end{proof}

%

\medskip
We now continue with the proof of Theorem~\ref{t:main_pls}.
Suppose we are in a Nash-stable outcome of the party affiliation game. 
For our proof we assume $C(v^0)\ge C(v^1)$. We will point out the small differences of the other case afterwards.   
Since $C(v^0)\ge C(v^1)$, $z^0$ is in the \ccr, i.e., all $z_i^0$ are biased to $0$ and $y^0$ is biased to 1 (by type 6). Thus, $z_{N+1}^0 = 0$.

The remainder of the proof splits depending on the coalition of  $z_1^0$ and $z_1^1$. By \lref{l:gate_incorrect} we
know that $z_1^\k=0$ implies that all gates in $C^\k$ are correct.


\noindent\underline{$z_1^0=1$:} 
By  \lref{l:gate_incorrect} we have $z_0^0=1$ and $y^0=0$. 
If $d^0=d^1=0$ then $d^0$ is better off changing to $1$ (by inspection of type 3 edges). 
If  $d^0=1$ then \lref{l:allgatesfixed}(a) implies $z_1^0=0$, a  contradiction. 
If $d^1=1$ and $z^1$ is in the \crr then by \lref{l:allgatesfixed}(b) and \lref{l:inputs_indifferent}, 
$v^1$ is indifferent w.r.t. type 1 edges. Thus $v^1=\eta^0$. But then either condition (ii) or (iii)
for putting $z^1$ in the \ccr (cf. type 6) are fulfilled. So $z^1$ has to be in the
\ccr.  \lref{l:allgatesfixed}(a) then implies $z_1^1=0$. But then the neighbourhood of $d^1$ in type 3 is dominated by $0$,
a contradiction to $d^1=1$. 

\noindent\underline{$z_1^0=0$ and $z_1^1=1$:} 
By \lref{l:gate_incorrect} we have $z_0^1=1$ and $y^1=0$. 
Since $C(v^0)\ge C(v^1)$ we know that $z^0$ is in the \ccr. 
So $z_1^0=0$ enforces $z_0^0=0$ and $y^0=0$.
By inspection of type 3 edges we have $d^0=0$ and thus $d^1=1$. 
First assume that $z^1$ is in the \crr, then $z_i^1=1$ for all $0\le i\le N+1$ and
\lref{l:inputs_indifferent} says that the inputs of all gates $g_i^1$ are indifferent
w.r.t. type 1 edges. In particular this holds for $v^1=(v_i^1)_{i\in[n]}$, so  $v^1= \eta^0$.
By \lref{l:copy_better}(b), $\eta^0= w^0$. Since $z_1^0=0$, $C^0$ is computing correctly and thus $ w^0=w(v^0)$.
Combining this we get $v_1=w(v^0)$ which contradicts our assumption that $z^1$ is in the \crr.
Thus $z^1$ is in the \ccr. Since $d^1=1$ we can apply  \lref{l:allgatesfixed}(a) to conclude $z_1^1=0$, a contradiction.

\noindent\underline{$z_1^0=0$ and $z_1^1=0$:} 
By  \lref{l:gate_incorrect} we have $z_0^0=z_0^1=0$ and $y^0=y^1=1$. 
Moreover we know that both circuits are computing correctly. 
If $d^0=1$ then $d^1=0$ and $d^0$ is indifferent w.r.t. type 3 edges. 
Since both circuits are computing correctly and $C(v^0)\ge C(v^1)$, the type~4 edges enforce $d^0=0$. 
But then $d^1$   is indifferent w.r.t. type 3 edges and the type 4 edges enforce $d^1=1$.
So, $d^0=0$ and $d^1=1$. If $z^1$ is in the \crr then \lref{l:allgatesfixed}(b) gives $z_1^1=1$, a contradiction.
Thus, $z^1$ is in the \ccr. Since $d^1=1$ we can apply \lref{l:copy_better}(b). This and the fact that $C^0$ is computing correctly 
implies $\eta^0=w(v^0)$. So $z^1$ can only be in the \ccr if $v^1=w(v^0)$. Since $C(v^0)\ge C(v^1)$ this implies that $v^0=v^1$ is a local optimum 
for the circuit $C$.

This finishes the proof in case $C(v^0)\ge C(v^1)$. The case $C(v^0) < C(v^1)$ is completely symmetric except here the conclusion $v^0=v^1$ in
the very last sentence leads to the contradiction $C(v^0) < C(v^0)$. So this case can't happen in a local optimum.

Note that throughout the construction we made sure that no node is incident to more than one negative edge. 
This completes the proof of \tref{t:main_pls}. 
\end{proof}
The instance produced by this reduction has the property that no node is indifferent between the 
two coalitions. We will make use of this property later in the paper.
\begin{corollary}
\label{c:oepastar}
\oepa is \PLS-complete even if restricted to instances where no player is ever indifferent between the two coalitions, i.e. \nioepa is \PLS-complete.
\end{corollary}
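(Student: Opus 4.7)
My plan is to argue that the \oepa instance constructed in the proof of Theorem~\ref{t:main_pls} already (possibly after a trivial tie-breaking perturbation) satisfies the stronger property that no player is ever indifferent between the two coalitions, so no further reduction is needed.

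First, I would invoke the strict hierarchical structure of edge weights that is already built into the construction: for each type $j \in \{5,6,7\}$, the smallest edge weight of type~$j$ exceeds the sum of all edge weights of types $j+1,\dots,8$, and the same domination holds between types $1$–$4$ (through their large multiplicative factors) and types $5$–$8$. Consequently, for any node $v$ in any candidate outcome, the sign of the difference between $v$'s utility in coalition~$1$ and in coalition~$0$ is determined by the lowest type at which $v$'s net pull is nonzero, and indifference requires the net pull of $v$ to vanish \emph{within every type simultaneously}.

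Second, I would go through the node categories and verify that each always has a nonzero net pull at some type. Gate-output nodes $g_i^\k$ receive a nonzero bias from type~8 by definition. Internal gate nodes $\alpha,\beta,\gamma,\delta,\lambda$ receive a nonzero bias from type~7 in both push regimes. Flag nodes $z_i^\k$ and $y^\k$ receive a nonzero bias from type~6. The decision nodes $d^\k$ are pinned by the asymmetric edges $(0,d^0),(1,d^1)$ of weight $2^n$ in type~4, whose contribution cannot be cancelled by the symmetric comparator edges (these contribute equal-weight terms $2^{2n+1-i}$ to both sides). The feedback-solution nodes $\eta_i^\k,\theta_i^\k,\mu_i^\k,w_i^\k$ in Figure~\ref{f:copy} are connected by distinct weights together with boundary edges to $d^{\nk}$ or constants, which, by inspection of the gadget, break all potential ties. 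The supernodes and constants are locked into fixed coalitions by the large $W,M$ edges of Observation~\ref{o:NashPLS}, so indifference for them is vacuous.

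The main obstacle will be checking the handful of node types that are not directly biased by type~8 (in particular $d^\k$, $\eta,\theta,\mu$, and the supernode-constant nodes), where indifference would have to be ruled out by the internal structure of their gadgets rather than a uniform bottom-layer bias. If a residual indifference is possible at some node, a clean uniform fix is available: scale every edge weight by a large factor and then add to each edge $e$ a distinct perturbation $\varepsilon_e$ chosen as a unique power of $2$ smaller than the smallest bias used anywhere in the proof of Theorem~\ref{t:main_pls}. After this perturbation, the total weight of any set of incident edges has a unique binary expansion, so a tie is impossible; and because every strict inequality in the proof of Theorem~\ref{t:main_pls} is preserved, the correspondence between local optima of the \oepa instance and the original \textsc{CircuitFlip} instance is untouched. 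The one-negative-edge-per-node property is also preserved since only magnitudes, not signs, are modified. This establishes that \nioepa is \PLS-complete.
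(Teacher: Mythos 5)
Your proposal is correct and follows essentially the same route as the paper: the paper gives no separate argument for Corollary~\ref{c:oepastar} at all, simply asserting that the instance built in the proof of Theorem~\ref{t:main_pls} already has the no-indifference property. Your node-by-node verification via the type hierarchy is exactly the justification the paper leaves implicit, and your observation that indifference would require the net pull to vanish at every type simultaneously is the right organizing principle. The one place your sketch is slightly off is the $d^\k$ nodes: the comparator edges need not ``contribute equal-weight terms to both sides'' (both endpoints $g^0_{n+i}$ and $g^1_i$ may sit in the same coalition), but the conclusion still holds because the net type-4 pull on $d^\k$ is congruent to $2^n$ modulo $2^{n+1}$ and hence nonzero. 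Your fallback --- perturbing each edge by a distinct power of two below the smallest type-8 bias --- is a genuine addition not present in the paper; it is sound (it preserves all strict preferences, all biases, the sign pattern, and hence the one-negative-edge property and the local-optimum correspondence) and it covers the few gadget nodes, such as $\theta$ and $\mu$ in the type-5 component, whose tie-freeness is hardest to check by inspection. In that respect your argument is more robust than the paper's bare assertion.
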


\section{Individual stability}
\label{sec:is}

In this section, we study the computational complexity of finding individual
stable outcomes.
We first provide a polynomial-time algorithm for \isTwo, which we define as the
problem of finding an individual stable outcome when only two coalitions can form,
i.e., we restrict the number of coalitions in the problem definition as for 
\pa.
The main result in this section is that \is is \plsc. We reduce from
\oepa that was shown to be \plsc in the previous section.
Our reduction uses exactly 5 coalitions, a restriction which we enforce using
supernodes.
We leave open the computational complexity of \isThree and~\isFour.

\begin{proposition}
  \label{p:is2}
  \isTwo can be solved in polynomial time.
\end{proposition}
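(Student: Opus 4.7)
My plan is to give a direct, one-directional greedy algorithm rather than rely on a general local-search argument (which would only converge in pseudo-polynomial time via the potential). I initialize $V_0 := V$ and $V_1 := \emptyset$, and then repeatedly pick any $i \in V_0$ who (i) strictly prefers the coalition $V_1 \cup \{i\}$ to her current coalition $V_0$, and (ii) is not vetoed by any current member of $V_1$, i.e.\ $w(\{i,j\}) \ge 0$ for every $j \in V_1$ with $\{i,j\} \in E$. Such an $i$ is then moved from $V_0$ to $V_1$. Since moves are unidirectional, the algorithm halts after at most $|V|$ iterations, each of polynomial cost.

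The correctness proof splits into an easy direction and a key direction. The halting condition immediately rules out any veto-in feasible deviation by a player in $V_0$, so the substantive task is to show that no player currently in $V_1$ has any feasible deviation either---neither to $V_0$ when $V_0 \neq \emptyset$, nor (in the boundary case $V_0 = \emptyset$) to a fresh singleton. For this I would maintain the following invariant throughout the run: for every $i \in V_1$, the utility of $i$ in $V_1$ is non-negative, and $i$ strictly prefers $V_1$ to $V_0$. The invariant holds when $i$ first enters $V_1$, since the veto-in condition forces $w(\{i,j\}) \ge 0$ for every $j$ then in $V_1$ and $i$ moves only because she strictly prefers the switch. Every subsequent entrant $k$ into $V_1$ must also satisfy $w(\{i,k\}) \ge 0$ (else $i$ would veto $k$), so $i$'s utility in $V_1$ weakly increases by $w(\{i,k\})$ while her hypothetical utility in $V_0$ weakly decreases by the same amount, preserving the invariant.

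I do not foresee a real technical obstacle. The only care needed is to check that the two boundary configurations ($V_1$ stays empty, leaving the grand coalition $V = V_0$; or $V_0$ empties entirely into $V_1$) are both IS, which follows either directly from the halting condition or from the invariant above. It is worth remarking that this monotonicity argument crucially exploits the two-coalition restriction and fails as soon as three coalitions are permitted, which is consistent with the \plsc reduction for \is carried out in Section~\ref{sec:is}.
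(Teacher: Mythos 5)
Your algorithm and proof are correct, but they take a genuinely different route from the paper. The paper's proof (assuming at least one negative edge) first places every node that has an incident negative edge so that it has a negative edge crossing to the other coalition, which by the veto-in rule permanently locks all such nodes in place; it then contracts them into two terminals $s$ and $t$ and places the remaining nodes, which see only positive edges, by a single min-cut/max-flow computation, so that the cut globally maximises their total happiness and none of them wants to switch. You instead run a monotone greedy migration out of the grand coalition: because every entrant into $V_1$ must have only positive edges to the current members of $V_1$, the set $V_1$ always induces only positive internal edges, so each member's utility inside $V_1$ can only increase over time while her hypothetical utility back in $V_0$ can only decrease, and the strict preference established at entry time persists. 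Your invariant argument is complete, including the boundary cases $V_1=\emptyset$ (grand coalition, where halting forces every utility to be non-negative) and $V_0=\emptyset$ (where non-negativity of utility in $V_1$ rules out the singleton deviation), and the $|V|$ bound on the number of moves is immediate from unidirectionality. As for what each approach buys: the paper's min-cut step gives the stronger structural guarantee that the positive-only nodes are placed globally optimally given the locked nodes, and it exhibits the ``lock via veto'' idea that recurs in the \plsc reductions; your argument is more elementary (no flow computation needed) and is closer in spirit to Propositions~\ref{prop:existsneg} and~\ref{prop:justgrow}, in that it shows a suitably scheduled sequence of at most $|V|$ genuine local improvements from a canonical starting partition already reaches a stable outcome. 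One cosmetic remark: since the paper assumes $w_e\neq 0$ for all edges, the veto-in condition with $T_{in}=1$ requires all of $i$'s edges into the target coalition to be strictly positive, which coincides with your condition $w(\{i,j\})\ge 0$.
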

\begin{proof}
We assume that there is at least one negative edge. Otherwise, the grand coalition is Nash-stable.
The algorithm goes as follows:
\begin{quote}
Start with any bipartition. 
Move nodes with incident negative edges so that they have a negative edge to the other coalition. 
In each of the two coalitions, contract all nodes with negative incident edges into a single node 
and call the contracted nodes $s$ and $t$. 
For any other node the new edge weights to $s$ and $t$ are the sum of the original edge weights to
the corresponding contracted nodes.  
Now (ignoring all edges between $s$ and $t$) compute a min cut between $s$ and $t$ 
via a max flow algorithm and assign the nodes accordingly.
\end{quote}
After the first stage, all nodes that we are about to contract have a negative
edge to the other coalition.  So they are not allowed to join the other
coalition. This property is preserved by contraction.  Afterwards, the flow
algorithm operates only on positive edges and computes a global minimum cut
between $s$ and $t$.  Thus, the cut also maximizes the total happiness of all
non-contracted nodes, so none of these nodes has an incentive to switch
coalitions.  All performed steps of the algorithm can be done in polynomial
time.
\end{proof}

Next we show that \is is \plsc\footnote{
The version of Theorem~\ref{thm:is} that appeared in~\cite{GS11} missed a 
special case that is dealt with 
here.} when we do not impose a restriction on the 
number of coalitions in the problem definitions as we did for \isTwo.

\begin{theorem}
\label{thm:is}
\is is \PLS-complete.
\end{theorem}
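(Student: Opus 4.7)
The plan is to give a \PLS-reduction from \nioepa, which is \plsc by Corollary~\ref{c:oepastar}.

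I begin with an instance $G=(V,E,w)$ of \nioepa. Using the supernode technique of the proof of Observation~\ref{o:NashPLS}, I introduce five supernodes $S_0,\dots,S_4$, pairwise connected by edges of very large negative weight $-M$, with each $S_k$ joined to every original player by a positive edge. For $M$ sufficiently large, the supernodes occupy five distinct coalitions in any individually-stable outcome, and scaling the player--supernode weights above all player--player sums forces every $i\in V$ into one of the resulting five coalitions $C_0,\dots,C_4$.

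Next, I set $S_0$ and $S_1$ to have equal weight $W$ on their edges to each $i\in V$, and copy the edges of $G$ between the original players (rescaled to be dominated by $W$). With this, a deviation of some $i$ from $C_0$ to $C_1$ is Nash-feasible in the hedonic game exactly when the corresponding party switch is profitable in~$G$, and the coalitions $C_0,C_1$ naturally play the roles of the two parties.

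The delicate step is to rule out the possibility that such a Nash-feasible hedonic deviation is blocked by the veto-in condition (i.e., by $i$'s unique enemy sitting in the target coalition; recall from \oepa that each player has at most one enemy), which would otherwise leave the hedonic game individually stable even though the induced bipartition of $G$ is not Nash-stable. The coalitions $C_2,C_3,C_4$ are designed precisely to provide alternative, veto-free escape routes in this case: their supernode weights (and any small auxiliary intra-coalition gadgets) are tuned so that (a) in any individually-stable outcome $C_2,C_3,C_4$ contain only their own supernodes, so the bipartition of $V$ read off from $\{C_0,C_1\}$ is well defined; and (b) whenever a player in $G$ would profitably switch parties but her veto-in move into the other party is blocked by her unique enemy, she instead has a profitable individually-feasible deviation into one of $C_2,C_3,C_4$ (whose supernode always welcomes her). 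The \nioepa hypothesis that no player is ever indifferent between the two parties is crucial here, as it lets us separate the relevant utility thresholds with strict inequalities.

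I expect this last step to be the main obstacle: producing concrete edge weights for the three transit coalitions that simultaneously (i) attract no original player in any individually-stable outcome and (ii) unlock every Nash-unstable configuration of $G$ that the veto-in rule would otherwise mask. This is the delicate balance that seems to force the reduction to use precisely five coalitions rather than two, three, or four. Once it is in place, extracting a Nash-stable outcome of the \nioepa instance from an individually-stable outcome of the constructed hedonic game is immediate: record, for each original player, which of $C_0$ or $C_1$ she occupies. Both directions of the reduction are polynomial-time computable, yielding \PLS-completeness of \is.
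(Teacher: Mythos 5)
There is a genuine gap, and it sits exactly where you predict it will: the ``escape route'' mechanism for coalitions $C_2,C_3,C_4$ cannot be realized with static edge weights, and the paper resolves the blocking problem by an entirely different device. The difficulty is that whether a player $i$ wants to switch parties is a \emph{relative} comparison ($u_1>u_0$, where $u_\kappa$ is the sum of her original edges into party $\kappa$), whereas the attractiveness of a transit coalition that, by your own requirement (i), contains only its supernode is an \emph{absolute} quantity $w(i,S_2)$ compared against $W+u_0$. Since $u_0$ and $u_1$ vary independently over configurations, no fixed $w(i,S_2)$ can make the escape profitable exactly when $i$ wants to switch but is vetoed; and if $w(i,S_2)$ is ever large enough to beat $W+u_0$, then $i$ is welcome in and attracted to $C_2$ in some reachable configurations, wrecking requirement (i) and leaving the induced bipartition ill-defined. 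Moreover, by keeping the negative edges of $G$ between original players you guarantee that the veto problem actually arises, with no dynamic mechanism to detect and neutralize it.

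The paper's proof instead \emph{removes} every negative edge between original players and replaces it with the gadget of Figure~\ref{fig:is_gadget}: auxiliary nodes $a',b',c,d,e,f$ restricted to various subsets of the five coalitions, together with a circuit built from purely positive-edge NOR gates (\lref{l:compute_function}) that \emph{looks at} the original neighbourhoods of $a$ and $b$ and \emph{biases} the auxiliary nodes. The veto is simulated by $a'$ (joined to $a$ by a $-M$ edge) sitting in $a$'s target coalition when $a$ should be locked, and vacating to coalition $2$ precisely when $a$ can improve (\lref{l:noimprove}, \lref{l:improve}); the utility contribution of the deleted negative edge is recovered, up to a uniform shift by $w$, through a positive edge from $a$ to the tracker node $b'$, which equals $\neg b$ whenever $a$ is unlocked. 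So the blocker is dynamically removed rather than the blocked player being dynamically rerouted, and the extra coalitions house the gadget's internal nodes rather than serving as parking space for original players. Your outer frame (reduce from \nioepa, five supernodes, read the bipartition off $C_0,C_1$) matches the paper, but without a per-edge gadget driven by a look-at/bias circuit the core of the argument is missing.
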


\begin{proof}
We start with an instance of \nioepa.
The instance has the property that no player is
ever indifferent between the two coalitions that make up
stable outcomes.
We add five supernodes 
which are connected by a complete graph of sufficiently large negative edges.
This enforces that in any stable outcome the supernodes are in different coalitions, say 
$0$, $1$, $2$, $3$, $4$. 
The supernodes are used to restrict which coalition a node
can be in in a stable outcome.
This is achieved by having large positive edges of equal weight to the corresponding supernodes.
All original nodes of the \nioepa instance are restricted to be $0$ or $1$.

We now show how to simulate a negative edge of \nioepa by an $\is$-gadget.
To do so, we replace a negative edge $(a,b)$ of weight $-w$ with the gadget in \fref{fig:is_gadget}. 
Nodes $a$ and $b$ are original nodes and restricted to $\{0,1\}$, 
node $a'$ is restricted to $\{0,1,2\}$, 
node $b'$ is restricted to $\{0,1,3\}$, 
node $c$ is restricted to $\{2,3,4\}$,
node $d$ is restricted to $\{2,4\}$,
node $e$ is restricted to $\{1,2\}$,
node $f$ is restricted to $\{0,2\}$.
As depicted in the gadget, nodes $b',c,d,e,$ and $f$ have additional offsets. 

Coalitions $2,3$ and $4$ are only used locally within the gadget.
\begin{figure*}
\centering
\resizebox*{\textwidth}{!}{%
\begin{minipage}[T]{0.71\linewidth}
\begin{center}
\resizebox*{\textwidth}{!}{\input{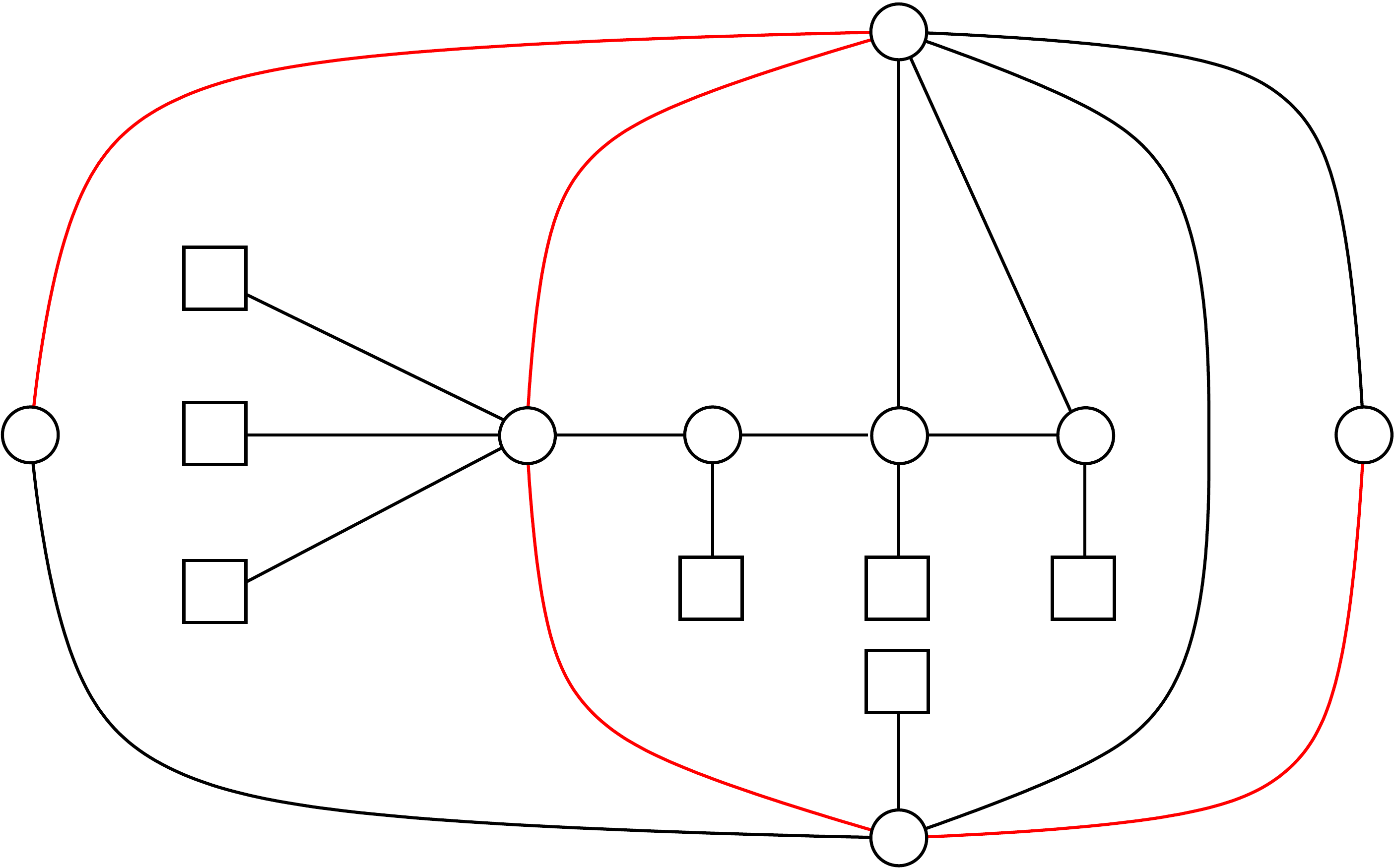_t}}
\end{center}
\end{minipage}
\begin{minipage}[T]{0.02\linewidth}
\null~
\end{minipage}
\begin{minipage}[T]{0.27\linewidth}
\noindent
\hrule\smallskip
\textbf{Bias internal nodes}
\smallskip
\hrule 
\begin{algorithmic}
\If{$a$ can improve}
\State bias $c$ to $3$
\State bias $a'$ to $2$
\Else
\State bias $a'$ to $\{0,1\}$ 
\State bias $c$ to $2$
\EndIf
\If{$b$ can improve}
\State bias $b'$ to $3$
\Else
\State bias $b'$ to $\{0,1\}$
\EndIf
\end{algorithmic}
\hrule
\end{minipage}
} 
\caption{Gadget to replace negative edges}
\label{fig:is_gadget}
\end{figure*}
The pseudocode next to the gadget describes how the internal nodes of the gadget are biased. 
Here, checking whether a node can improve is  w.r.t. her \emph{original} neighborhood.
We use ``look at'' and ``bias'' as defined in
the following lemma and definition, which are analogous to those in \cite{ET11,MT10}. 
In particular, we check if a node can improve by looking at all nodes 
in her original neighborhood. 
\begin{lemma}
\label{l:compute_function}
For any  polynomial-time computable function $f:\{0,1\}^k\mapsto \{0,1,2,3\}^m$ one can construct a graph $G_f=(V_f,E_f, w)$ having the following properties:
(i) there exist $s_1, \ldots, s_k$, $t_1, \dots, t_m \in V_f$,
(ii) all edges $e\in E_f$ are positive,
(iii) $f(s_1, \ldots, s_k)=(t_1, \dots, t_m)$ in any 
stable solution of the 
hedonic game defined by $G_f$. 
\end{lemma}
\begin{proof}
We first show how to construct a graph $G_{f'}$, which implements a function $f':\{0,1\}^k\mapsto \{0,1\}^{4m}$. 
This part is similar to the proof of \lref{l:compute_function(old)}.
Afterwards, we show how to augment $G_{f'}$ to implement $f:\{0,1\}^k\mapsto \{0,1,2,3\}^m$
 
It is well known that for any polynomial computable function $f':\{0,1\}^k\mapsto \{0,1\}^{4m}$
one can construct a circuit $C$ with polynomial many gates that implements this function \cite[Theorem 9.30]{Sip06}. 
Clearly, we can also restrict $C$ to NOR gates with fan-in and fan-out at most 2. 
Organize the gates in levels according to their distance to $C$'s output; output gates of $C$ are at level 1.   

We replace each gate $g_i$ at level $\ell$ with the gadget in Figure~\ref{f:nor_is}. 
Nodes $u,v$ are inputs and $y$ is the output of the gate. Nodes $u,v,y$ are restricted to $\{0,1\}$.
Nodes $w$ and $x$ are internal to the gate and restricted to $w\in\{1,2\}$ and $x\in\{0,2\}$, respectively.
By construction of the NOR gate, we have that in any Nash-stable solution, $y=1$ if and only if $u=v=0$. In 
other words $y=NOR(u,v)$.

\begin{figure}[h]
\centering
\resizebox*{0.5\textwidth}{!}{\input{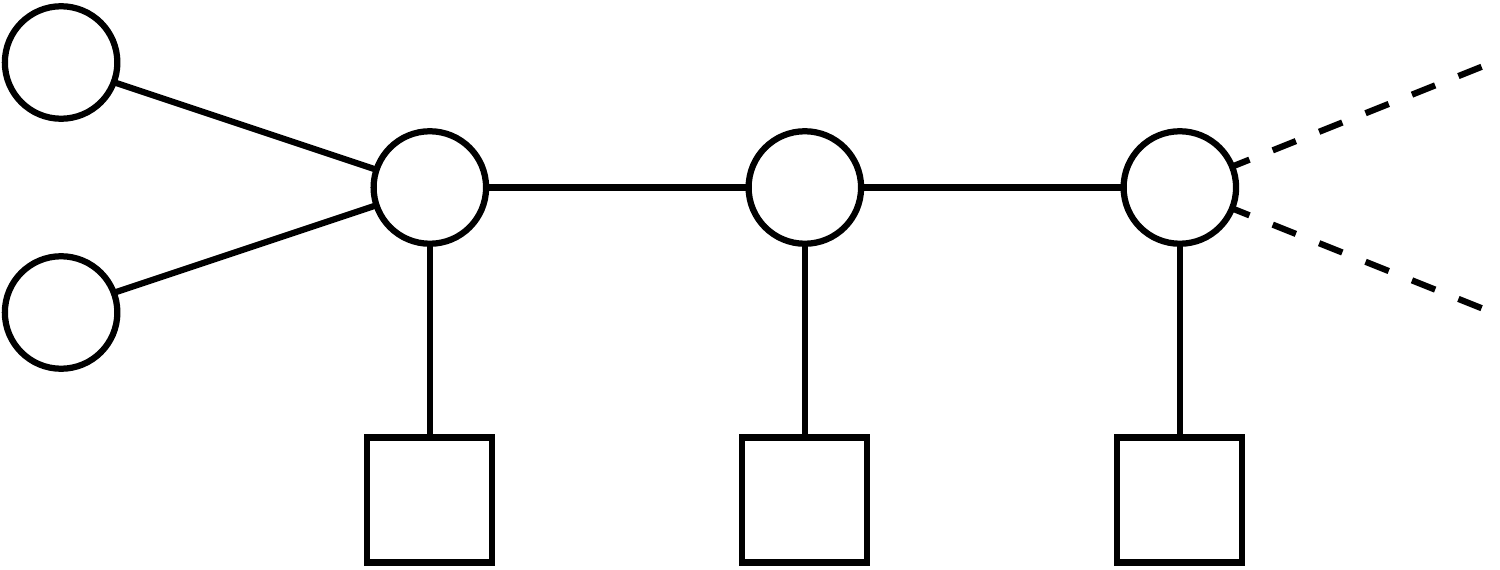_t}}
\caption{NOR gate without negative edges}
\label{f:nor_is}
\end{figure}

If $u$ (or $v$) is an input of the circuit then we connect $u$ to the
corresponding input $s$-node by an edge of weight $3^{4\ell+1}$.
To connect the output $t$-nodes, we need to augment $G_{f'}$ in order to allow for the extended range of the function 
$f:\{0,1\}^k\mapsto \{0,1,2,3\}^m$. 
For each output node $t_i$ we will use 4 outputs of $G_{f'}$.

For each of this 4 outputs, we first change the domain by using a slightly modified NOR-gate.
Observe, that by changing the offsets and restrictions of nodes $w$, $x$, and $y$ in a NOR-gate, we can change the domain 
of the NOR gate to any two distinct values in $\{0,1,2,3,4\}$. 
E.g., if we want $y\in\{2,4\}$ and $y=2$ if and only if $u=v=0$, then we can
change the restrictions $w\in\{1,3\}$, $x\in\{2,3\}$, and $y\in\{2,4\}$; and the
offset of $w$ to 3, $x$ to 2, and $y$ to 4. Using this idea, we first change the
domain of the 4 outputs of $C$ to 
$\{0,4\}$, $\{1,4\}$, $\{2,4\}$, and $\{3,4\}$, respectively. 
The 4 modified outputs are then all connected to the output node $t_i$ (which we restrict to $\{0,1,2,3\}$) 
with edges of weight $1$. 
By the right choice of $f'$, in particular by ensuring that $f'$ forces exactly one of the modified outputs being 
$\neq 4$, we can implement any function~$f$.

The claim follows since our construction fulfils properties (i), (ii) and (iii).
\end{proof}


\begin{definition}
\label{d:function}
For a  polynomial-time computable function $f:\{0,1\}^k\mapsto \{0,1,2,3\}^m$ we
say that $G_f$ as constructed in \lref{l:compute_function} is a graph that
\emph{looks} at $s_1, \dots, s_k \in V_f$ and \emph{biases} $t_1, \dots, t_m \in
V_f$ according to the function $f$.
\end{definition}

Recall that
the instance of \nioepa has the property that no player is
ever indifferent between the two coalitions that make up
stable outcomes. By scaling edge weights we can implement the ``look at''
required to bias the internal nodes of the gadget without affecting their
original
preferences.

We say that node $a$ is \emph{locked} by the gadget if $a=1$ and $a'=0$ or  $a=0$ and $a'=1$.
Node $b$ is said to be locked accordingly.
The following three lemmas describe the operation of the gadget. All three lemmas should be read with the implicit clause:
\emph{If the internal nodes ($a'$, $b'$, $c$, $d$, $e$, $f$) of Figure~\ref{fig:is_gadget} are stable.} Let $\neg u$ denote the complement of $u$ over $\{0,1\}$.
\begin{lemma}
Node $c$ is either in coalition $2$ or $3$, while nodes $d$, $e$ and $f$ are in coalition~$2$.
\end{lemma}
\begin{proof}
We start by showing that $c\in\{2,3\}$. By way of contradiction assume $c=4$. Then 
$c=4 \Rightarrow d=4 \Rightarrow e=1 \Rightarrow f=0$. 
Now $a'$ strictly prefers coalitions $0$ and $1$ to coalition $2$ and is only blocked 
by $a$ from entering one of those coalitions. Thus $a'\in\{0,1\}$. 
Together with $b'\in\{0,1,3\}$ this directly implies that $c$ can improve by choosing $c=2$, 
contradicting our assumption. Thus $c\in\{2,3\}$.

From $c\neq 4$ we immediately get $d=2 \Rightarrow e=2 \Rightarrow f=2$, which completes the proof the lemma.
\end{proof}

\begin{lemma}
\label{l:noimprove}
If neither $a$ nor $b$ can improve then $a$ and $b$ are locked by the gadget.
\end{lemma}
\begin{proof}
Since neither $a$ nor $b$ can improve, $a'$ and $b'$ are biased to $\{0,1\}$ and $c$ is
biased to $2$.
If $c=2$ then the bias on $a'$ assures $a'=\neg{a}$. So $b'$ has an edge of weight $w$ to both
$0$ and $1$. Together with the bias this implies $b'=\neg{b}$.
If $c=3$ then the bias on $b'$ assures $b'=\neg{b}$. So $a'$ has an edge of weight $w$ to both
$0$ and $1$. Together with the bias this implies $a'=\neg{a}$.
So in both cases $a'=\neg{a}$ and $b'=\neg{b}$. 
The claim follows.
\end{proof}

\begin{lemma}
\label{l:improve}
If $a$ or $b$ (or both) can improve then one improving node is not locked while the other node is locked by the gadget. 
Moreover, if $a$ (resp. $b$) is not locked by the gadget then $b'=\neg{b}$ (resp. $a'=\neg{a}$).
\end{lemma}
\begin{proof}
We consider three cases: 
(i) only $a$ can improve, 
(ii) only $b$ can improve, 
(iii) $a$ and $b$ can improve.  

\noindent
\underline{Case (i) (only $a$):}
Here $c$ is biased to $3$, $a'$ is biased to $2$, and $b'$ is biased to $\{0,1\}$. 
First assume $c=2$. 
This enforces $a'=\neg{a}$ which together with the bias implies $b'=\neg{b}$. 
But then the bias on $c$ gives $c=3$, a contradiction. 
Thus $c=3$, which enforces $b'=\neg{b}$ and with the bias implies $a'=2$.
So $a$ is not locked and $b$ is locked.
\\
\underline{Case (ii) (only $b$):}
Here $c$ is biased to $2$, $a'$ is biased to $\{0,1\}$, and $b'$ is biased to $3$. 
First assume $c=3$. 
This enforces $b'=\neg{b}$ which together with the bias implies $a'=\neg{a}$. 
But then the bias on $c$ gives $c=2$, a contradiction. 
Thus $c=2$, which enforces $a'=\neg{a}$ and with the bias implies $b'=3$.
So $a$ is locked and $b$ is not locked.
\\
\underline{Case (iii) ($a$ and $b$):}
Here $c$ is biased to $3$, $a'$ is biased to $2$, and $b'$ is biased to $3$. 
If $c=2$ then this enforces $a'=\neg{a}$, which together with the bias implies $b'=3$. 
So in this case $a$ is locked and $b$ is not locked.
If $c=3$ then this enforces $b'=\neg{b}$, which together with the bias implies $a'=2$. 
So in this case $a$ is not locked and $b$ is locked.

In every case both claims of the lemma are fulfilled.
\end{proof}

To complete the proof we show that a stable outcome of the \is instance is also a stable 
outcome for the \nioepa instance.
Suppose the contrary. Then there must exist an original node which is stable for $\is$ but not for
\nioepa. Clearly such a node must be the node $a$ or $b$ for some gadget. So either $a$ or $b$ (or both)
can improve. But then by the first statement in  \lref{l:improve} one of the improving nodes is unlocked, say $a$. Since $a$ was only incident to one negative edge in the \nioepa instance, $a$ cannot be locked by any
other gadget. Moreover, by the second statement in \lref{l:improve}, $a$ is now connected in the gadget
by a positive edge to the node $b'$ and $b'=\neg{b}$. 
On the one hand, if $a=b$ then the original edge $(a,b)$ contributes $-w$ to $a$'s utility while now $a$ receives $0$ from the edge $(a,b')$. 
On the other hand, if $a\neq b$ then the corresponding utility contributions are $0$ and $w$.  
So if $a$ changes strategy then the difference in her utility w.r.t. $b$ is the same in both problems, since we just shifted the utility of node $a$ w.r.t. $b$ by $w$. 
So $a$ is also 
not stable for $\is$, a contradiction. This finishes the proof of \tref{thm:is}.
\end{proof}

\section{Other veto-based stability concepts}
\label{sec:other_veto}

In \is a single player can veto against others joining her coalition but there is no restriction on leaving a
coalition. The following proposition shows that adding certain leaving conditions yields polynomial-time
convergence from the all-singleton partition.
\begin{proposition}
\label{prop:existsneg}
Any problem in column~3 of \fref{f:table} can be solved in polynomial time 
provided that the leaving condition requires that the leaving node has at least one negative edge 
within the coalition. In particular this hold for the problems in cells 3B, 3C, and 3D. 
\end{proposition}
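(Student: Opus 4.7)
The plan is to show that starting from the all-singleton partition, any sequence of improving deviations (each satisfying veto-in entering and the stated leaving condition) terminates in at most $|E|$ steps, with each step findable in polynomial time. My potential function is $\Phi$, the number of positive edges whose endpoints currently lie in the same coalition; clearly $0\le\Phi\le|E|$.

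The first step is to prove the structural invariant that, throughout the dynamics, every coalition induces a subgraph containing only positive edges. This goes by induction on the number of deviations performed: the base case is vacuous since all coalitions are singletons, and when a player $i$ enters a coalition $c'$ the veto-in condition forces every edge from $i$ to $c'$ to be positive, preserving the invariant for $c'\cup\{i\}$; the residual coalition $c\setminus\{i\}$ is a subset of $c$ and so also inherits the invariant.

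The central step, which I expect to be the main delicate point, combines the invariant with the leaving hypothesis to conclude that each deviation strictly increases $\Phi$ by at least one. Under the invariant, all edges from $i$ to her current coalition $c$ are positive; but the leaving condition forbids $i$ from leaving unless she has at least one \emph{negative} internal edge (or no internal edge at all). Hence $i$ must in fact have no internal edges in $c$, so removing her from $c$ does not decrease $\Phi$. On the entering side, veto-in combined with Nash feasibility forces $i$'s new utility to be strictly positive with all her new edges to the target coalition positive, so at least one positive edge is added and $\Phi$ strictly increases by at least $1$.

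Finally, I would verify that the leaving-condition hypothesis is indeed satisfied in cells 3B, 3C, and 3D: veto-out requires all internal edges to be negative (or none); sum-out, combined with the nonzero-weight assumption, requires at least one negative internal edge whenever the internal utility is non-positive and $i$ has internal edges; vote-out with $T_{out}>0$ requires a positive fraction of internal edges to be negative (or none at all). In each of these three cases, whenever $i$ has any internal edges, at least one is negative, as required. Since $\Phi$ is integer-valued, bounded by $|E|$, and strictly increases at each step, the dynamics terminates in at most $|E|$ improvements, each of which can be located by scanning all players and candidate target coalitions in polynomial time. The resulting partition has no improving deviation and is therefore stable, proving the proposition.
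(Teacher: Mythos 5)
Your proof is correct and rests on the same core idea as the paper's: starting from the all-singleton partition, veto-in entering keeps every coalition's induced subgraph all-positive, and the leaving hypothesis then prevents any player with an internal edge from ever moving again. The only difference is bookkeeping --- you count positive internal edges as a potential, giving an $|E|$ bound on the number of improvements, whereas the paper observes directly that each player moves at most once, giving the slightly tighter $|V|$ bound.
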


\begin{proof}
We use local improvements starting from the set of singleton coalitions.
Then a player can make at most one improving step, since all edges in resulting non-singleton coalitions
will be positive because of the veto-in restriction, and so no player can leave such a coalition.
Hence we arrive at a stable outcome in at most $|V|$ improving steps.
\end{proof}

Interestingly, requiring veto-out feasibility is already enough for polynomial-time convergence even if we have no restriction on the entering condition. This stands in contrast to \tref{thm:is}.
\begin{proposition}
\label{prop:justgrow}
All problems in row C of \fref{f:table} can be solved in polynomial time 
by local improvements using at most $2|V|$ improving steps.
\end{proposition}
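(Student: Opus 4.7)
The plan is to run local-improvement dynamics starting from the all-singletons partition, and to prove that each player performs at most one improving move; since a move only changes the coalition of one player, this gives at most $|V| \le 2|V|$ improving steps in total. The same argument works uniformly for all four cells of row C because only Nash-feasibility of deviations (which is implied by every entering condition) is exploited.

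The key invariant I would maintain along the dynamics is the following: \emph{every player $i$ currently in a non-singleton coalition $C$ has some $j \in C\setminus\{i\}$ with $w(\{i,j\})>0$.} Because the veto-out condition requires that \emph{every} edge of $i$ within $p(i)$ be strictly negative (edges of weight $0$ are absent by assumption), this invariant will immediately rule out any improving move by a player who already sits in a non-singleton coalition.

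I would prove the invariant by induction on the number of improving moves performed so far. The base case is vacuous, since initially all coalitions are singletons. For the inductive step, let $i'$ perform the next improving move. By the invariant applied to $p(i')$, the player $i'$ cannot be leaving a non-singleton coalition (some anchor would veto her), so $p(i')=\{i'\}$ and she moves into some coalition $c\ne\emptyset$. Nash-feasibility of the move gives $\sum_{j\in c} w(\{i',j\})>0$, so there exists some $j\in c$ with $w(\{i',j\})>0$. In the resulting coalition $c\cup\{i'\}$, this single edge anchors both $i'$ and $j$; every other member $k\in c$ retains its previous anchor, which still lies in $c\subseteq c\cup\{i'\}$ by the inductive hypothesis. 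Coalitions untouched by the move are unchanged, so the invariant is preserved.

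Once the invariant is established, the claim follows: every improving move is made by a player leaving her initial singleton, so no player moves twice, giving at most $|V|$ improving steps, well within $2|V|$. The only mildly delicate point is the case where the target coalition $c$ is itself a singleton $\{j\}$, in which the ``other members of $c$'' argument is empty; there the single positive edge supplied by Nash-feasibility conveniently anchors both members of the newly formed pair simultaneously.
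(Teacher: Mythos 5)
Your proof is correct, and it in fact establishes the slightly stronger bound of $|V|$ improving steps. The core mechanism is the same as in the paper's proof: under veto-out, a player with a positive edge inside her coalition can never leave, so once every member of every non-singleton coalition is ``anchored'' by such an edge, the only players who can still move are singletons. Where you differ is in how the anchoring is guaranteed. The paper runs the dynamics from an arbitrary partition but \emph{restricts} the improvement rule, forbidding a player from joining a non-empty coalition to which she has no positive edge; a player may then move at most twice (once to escape into a fresh singleton, once to join a non-empty coalition where she becomes locked), giving the stated $2|V|$ bound. You instead start from the all-singleton partition and derive the anchoring for free from Nash-feasibility: a singleton has utility $0$, so any improving move yields strictly positive utility in the target coalition and hence at least one positive edge there, which locks both the mover and her new positive neighbour; your invariant then shows no player ever moves twice. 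Your argument needs no artificial modification of the dynamics and yields one move per player, but it is tied to the all-singleton start; the paper's variant tolerates any starting partition at the cost of a restricted improvement rule and a factor of two. Both arguments use only Nash-feasibility of the entering deviation, so both cover all four cells of row C uniformly.
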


\begin{proof}
To get a running time of $2|V|$ (rather than $O(|V|^2)$) we restrict players from joining a non-empty coalition to which they have no positive edge.
This ensures that whenever a player joins a non-empty coalition then this player (and all players to which she is
connected by a positive edge in the coalition) will never move again. Moreover, a player can only start a new coalition once.
It follows that each player can make at most two strategy changes. In total we have at most $2|V|$ local improvements.
\end{proof}

\section{\wcis}
\label{sec:sum_cis}


Next we study \wcis, where a deviating player's 
total weight to the new coalition is non-negative, and to
the old coalition is non-positive.
Even though deviations are very restricted here, 
it is \PLS-complete to compute a stable outcome. 
\begin{theorem}
\label{thm:wcis}
\wcis is \PLS-complete.
\end{theorem}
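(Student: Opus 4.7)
Membership of \wcis in \PLS is routine via the total-happiness potential $\Phi(p) = \sum_i u_i(p) = 2\sum_{\{i,j\}\in E,\ p(i)=p(j)} w(\{i,j\})$. Every sum-feasible deviation is a fortiori Nash-feasible, so it strictly improves the deviator and hence raises $\Phi$ by twice the improvement; $|\Phi|$ is polynomially bounded by $2\sum_e |w_e|$, and for any outcome one can in polynomial time iterate over each (player, target coalition) pair and check strict improvement, sum-in and sum-out to either produce an improving move or certify stability.

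For \PLS-hardness, the plan is to reduce from \nioepa, which is \plsc by Corollary~\ref{c:oepastar}. The starting observation is that sum-feasibility is very restrictive: a deviating player must currently have utility $\le 0$, her target coalition must have edge-sum $\ge 0$, and in particular as soon as her utility becomes strictly positive she is locked forever. This rules out a naive use of the two-supernode trick from Observation~\ref{o:NashPLS}, since a large positive anchor edge would instantly lock every original node via a violation of sum-out. Instead, given a \nioepa instance $G=(V,E,w)$, the construction replaces each node $v\in V$ by a small gadget whose internal nodes cancel the weight of the anchor edges used to restrict the number of coalitions, so that $v$'s utility in its coalition is sign-aligned with its \pa side-preference, and sum-in and sum-out feasibility of a side-swap deviation in the constructed game coincide with Nash-feasibility of the corresponding party swap in $G$.

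The main obstacle is combining the coalition-restriction mechanism with the sum-balancing mechanism in a single construction: any large positive anchor edge used to keep the number of coalitions bounded must be paired with a compensating negative offset so that sum-out never locks a node, while still providing enough separation to prevent spurious coalitions from being locally stable. The no-indifference property inherited from \nioepa is essential here, because it guarantees that every Nash-feasible side swap in $G$ has a strict sign and therefore survives the small perturbations introduced by the gadget's internal offsets; this is precisely why the reduction starts from \nioepa rather than \oepa. Once the gadget is calibrated correctly, a case analysis in the style of the lemmas used for \tref{t:main_pls} yields the bijective correspondence between \wcis-stable outcomes of the constructed hedonic game and stable outcomes of $G$: a locally stable outcome in which some original $v$ still wants to swap sides in $G$ must contain a node whose gadget admits a sum-feasible deviation, a contradiction.
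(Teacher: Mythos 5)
Your membership argument is fine, and you have correctly identified the central obstacle: a large positive anchor edge locks a node forever via the sum-out condition. But the hardness part is a plan rather than a proof, and the plan as stated would not go through. Your gadget is described only as cancelling ``the weight of the anchor edges'', yet the anchors are not the real problem --- the original edges of $G$ are. For a side swap of player $i$ to be simultaneously sum-out feasible (current internal sum $\le 0$) and sum-in feasible (target sum $\ge 0$) exactly when it is Nash-feasible, player $i$'s payoff must be centred at zero, i.e.\ offset by $-\sigma_i/2$ where $\sigma_i$ is the \emph{total} weight of all edges incident to $i$ in $G$, not just the anchor weight. The paper's proof does exactly this and needs no per-node gadget: it reduces from \lmc, cast as a \pa instance in which every edge is negative, so each $\sigma_i$ is a negative integer, and gives each node an edge of weight $\frac{-\sigma_i}{2}+\frac{1}{4}$ to each of two supernodes. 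Since $\sigma_i<0$ these anchors are automatically positive, which forces the two-coalition structure, and by integrality a player who wants to switch has internal sum at most $\frac{\sigma_i}{2}-\frac{1}{2}$ and target sum at least $\frac{\sigma_i}{2}+\frac{1}{2}$; adding the anchor weight makes the current payoff at most $-\frac{1}{4}<0$ and the target sum at least $\frac{3}{4}>0$, so the Nash-feasible deviation is also sum-feasible.

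Your choice of \nioepa as the source problem is also problematic, and the claim that the no-indifference property is ``essential'' is a red herring. In a \nioepa instance a node may have many positive edges and hence $\sigma_i>0$, in which case the centring anchor $\frac{-\sigma_i}{2}+\frac{1}{4}$ is negative and no longer forces the node into one of the two supernode coalitions; this is precisely why the paper starts from \lmc, where all edges are negative. The strictness you hope to inherit from no-indifference is instead obtained from integrality of the weights together with the $\frac{1}{4}$ perturbation. Without the explicit $-\sigma_i/2$ offset, and without an argument for why the two-coalition structure survives when the anchors must simultaneously be kept sum-neutral, the reduction you sketch cannot be completed as described.
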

\begin{proof}
We reduce from \lmc. Consider an arbitrary instance of \lmc with only integer edge weights. 
Recall that such an instance can be cast as an instance of \pa by negating the weights of the edges. 
Let $G=(V,E,w)$ represent the \pa instance. 
For each player $i\in V$ let $\sigma_i$ be the total weight of edges incident to player $i$, i.e. $\sigma_i=\sum_{(i,j)\in E} w_{(i,j)}$. 
Observe that $\sigma_i$ is a negative integer.
We augment $G$ by introducing two new players, called \emph{supernodes}.  
Every player $i\in V$ has an edge of weight $\frac{-\sigma_i}{2}+\frac{1}{4}$ to each supernode.
The two supernodes are connected by an edge of weight $-M$ where $M$ is sufficiently large  (i.e., $M>\sum_{i\in V}(\frac{-\sigma_i}{2}+\frac{1}{4})$). 
The resulting graph $G'$ represents 
our \wcis instance.

Consider a stable outcome of the \wcis instance $G'$. 
By the choice of $M$ the two supernodes will be in different coalitions. 
Now consider any player $i\in V$. 
If $i$ is not in a coalition with one of the supernodes, then $i$'s payoff is negative. On the other hand
joining the coalition of one of the supernodes yields positive payoff, since $2(\frac{-\sigma_i}{2}+\frac{1}{4})+\sigma_i>0$.
Thus, each player $i\in V$ will be in a coalition with one of the supernodes. 
So our outcome partitions $V$ into two partitions, say $V_1, V_2$. 

It remains to show that any stable outcome for the \wcis instance 
is also a local optimum for the \pa instance.
Assume that the outcome of the \wcis instance is stable but in the corresponding outcome of \pa instance
there exists a player $i$ which can improve by joining the other coalition. W.l.o.g. assume $i\in V_1$.
Then,  
$
   \sum_{s\in V_1} w_{(i,s)}<\sum_{s\in V_2} w_{(i,s)}.
$
With $\sigma_i=\sum_{s\in V} w_{(i,s)}$ and since $\sigma_i$ is integer, we get
$$
  \sum_{s\in V_1} w_{(i,s)}\le\frac{\sigma_i}{2}-\frac{1}{2} <\frac{\sigma_i}{2}<\frac{\sigma_i}{2}+\frac{1}{2}\le\sum_{s\in V_2} w_{(i,s)}.
$$
It follows that in the \wcis instance, player $i$'s payoff is negative in her current coalition $V_1$ whereas joining $V_2$ would yield positive payoff.
This contradicts our assumption that we are in a stable
outcome of the \wcis instance.
The claim follows.
\end{proof}

\section{Voting-based deviations}
\label{sec:voting}

In this section we study the complexity of computing stable outcomes under various voting-based 
stability 
requirements.
We start by showing \PLS-hardness for the case that a deviating player needs a $T_{in}$ majority
in the target coalition but there is no restriction on leaving coalitions. 
\begin{theorem}
\label{thm:votein}
\votein is \PLS-complete for any voting
threshold $0\le T_{in}< 1$.
\end{theorem}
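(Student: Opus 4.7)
Membership in \PLS\ holds by the standard potential-function argument already established in the introduction, so only \PLS-hardness needs to be shown. The idea is to reduce from \pa. The case $T_{in}=0$ is immediate: ``at least a $0$ fraction'' is trivially satisfied, so vote-in stability coincides with Nash-stability and \PLS-hardness follows from Observation~\ref{o:NashPLS}. I therefore focus on $0<T_{in}<1$, which I treat as a fixed constant.

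Given a \pa instance $G=(V,E,w)$, I build an instance of \votein as follows. First I re-use the supernode construction of Observation~\ref{o:NashPLS}: add two supernodes $S_0,S_1$ joined by an edge of weight $-M$, and connect every $v\in V$ to both $S_0$ and $S_1$ by edges of weight $W$, where $W$ dominates every edge-weight sum from $G$ and $M$ in turn dominates $|V|\,W$. This already forces, in every stable outcome, the two supernodes to lie in different coalitions and every $v\in V$ to lie in one of those two coalitions. Next, for each $v\in V$ and each $i\in\{0,1\}$, I add a fresh set of $K$ \emph{mediator} nodes, each adjacent to $v$ by a positive edge of weight $W_v$ and to $S_i$ by a positive edge of weight $W_S$, with $W\gg W_S>W_v\gg\max_e|w_e|$; the mediators have no other incident edges. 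I take $K=\lceil T_{in}|V|/(1-T_{in})\rceil$, which is polynomial in $|V|$ since $T_{in}$ is constant.

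The four facts to verify, in roughly this order, are: (a) in every vote-in stable outcome each mediator attached to $v$ and $S_i$ ends up in $S_i$'s coalition, because otherwise it has a Nash-feasible move toward $S_i$ whose fraction of positive target-edges equals $1$; (b) consequently, the difference in $v$'s utility between being in $S_0$'s and $S_1$'s coalitions is exactly $\sum_{u\in S_0\cap V}w_{uv}-\sum_{u\in S_1\cap V}w_{uv}$, since the two $W$-edges to the supernodes and the $KW_v$ contribution from the $K$ mediators present in each supernode coalition cancel; (c) no other deviation of $v$ is Nash-feasible, because leaving a supernode coalition loses at least $W$ in utility, which dominates every other term; and most importantly (d) whenever the swap between the two supernode coalitions is Nash-feasible, it is automatically vote-in feasible, since $v$ has at least $K+1$ positive edges to the target coalition (one to the supernode there, $K$ to the mediators there) out of at most $K+1+|V|$ total edges, and the choice of $K$ makes the positive fraction at least $T_{in}$. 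Combining (a)--(d) yields a bijection between \pa-stable outcomes of $G$ and vote-in stable outcomes of the constructed hedonic game.

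The main obstacle is calibrating the weights and $K$ so that all four properties hold simultaneously: the mediators must be firmly locked to their designated supernode, their contribution must cancel exactly from $v$'s preference between the two supernode coalitions so that the reduction preserves \pa-improving moves, and their number must be just large enough to force the positive fraction above $T_{in}$ while remaining polynomial in the input. Once the weight hierarchy $M\gg W\gg W_S>W_v$ and the value of $K$ are fixed, each of (a)--(d) follows from a short local case analysis.
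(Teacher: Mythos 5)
Your proof is correct, but it takes a genuinely different route from the paper's. The paper reduces from \nioepa (whose hardness is the heavy Theorem~\ref{t:main_pls}), splits on whether $T_{in}>\frac{\Delta(G)-1}{\Delta(G)}$ --- falling back on the \is hardness of Theorem~\ref{thm:is} for thresholds above that bound --- and, in the remaining case, attaches $\Delta(G)-1$ ``followers'' to each endpoint of every negative edge; since the follower edges slightly perturb the original players' utilities, the no-indifference property of \nioepa is essential there. You instead reduce directly from \pa, attach $K=\lceil T_{in}|V|/(1-T_{in})\rceil$ mediators per (node, supernode) pair, and exploit the fact that the mediators contribute identically ($W+KW_v$) to a node's utility in either supernode coalition, so they cancel exactly and no indifference assumption is needed; the worst-case positive fraction $\frac{K+1}{K+|V|}\ge T_{in}$ then makes every Nash-feasible swap automatically vote-in feasible, giving a single uniform argument for all $0<T_{in}<1$. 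Your approach buys self-containedness (no dependence on Theorems~\ref{t:main_pls} and~\ref{thm:is}) and avoids the case split; the cost is that the gadget size grows like $1/(1-T_{in})$, so the reduction is polynomial only for fixed $T_{in}$ (which you flag, and which matches how the theorem is stated), whereas the paper's follower count depends only on $\Delta(G)$. Two calibration points you should make explicit when writing this up: $M$ must dominate not only $|V|W$ but also the total mediator--supernode weight $2K|V|W_S$ so that cohabiting supernodes still prefer to separate; and in arguing that every original node lands in a supernode coalition you should note that the entering move from outside is itself vote-in feasible by the same $\frac{K+1}{K+|V|}$ count. Both are routine given your weight hierarchy.
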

\begin{proof}
We reduce from \nioepa represented by an edge-weighted graph $G=(V,E,w)$.
Let $\Delta(G)$ be the maximum degree of a node in $G$. 
Recall that no player is ever indifferent between the two coalitions.

First observe that the case $T_{in}>\frac{\Delta(G)-1}{\Delta(G)}$ is exactly 
the same as \is (for which we show hardness in \tref{thm:is}), since in this case one negative edge is enough to veto a player joining a coalition.
In the following we assume $T_{in}\le\frac{\Delta(G)-1}{\Delta(G)}$.

We augment $G$ as follows: 

For every negative edge $(a,b)$ in $G$ we introduce $2\Delta(G)-2$ new nodes,
called \emph{followers}, and
connect them with $a$ and $b$ as shown in the \fref{f:votein}. Both, $a$ and $b$, get $\Delta(G)-1$
followers and have a $\delta$ edge to each of them. Moreover, the followers have also an edge
of weight $\varepsilon$ to the other node.
Here $0<\varepsilon<\delta$ and $\delta$ is small enough so that the player preferences of the original players ($a$ and $b$) are still determined only by the original edges.
\begin{figure}
\begin{center}
\resizebox*{0.5\textwidth}{!}{
\includegraphics{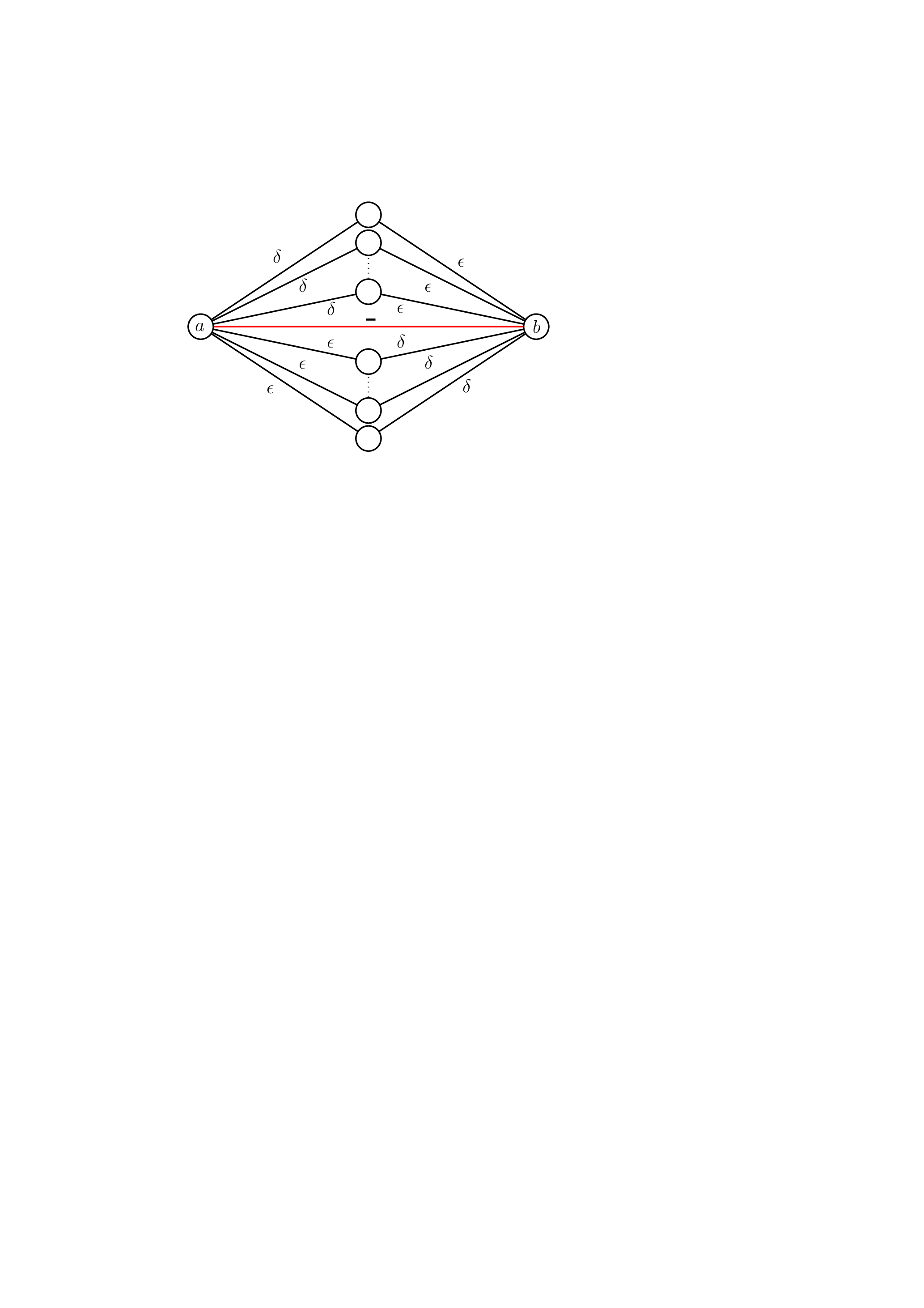}
}
\end{center}
\caption{Gadget used for showing that \votein is \plsc. The gadget augments negative edges with followers that ensure that there is always a $T_{in}$-majority when a player enters a coalition.} 
\label{f:votein}
\end{figure}
In a stable outcome the followers will be in the same coalition as their ``leader'', i.e., the node to 
which they have a $\delta$ edge.
The followers make sure that their is always a $T_{in}$-majority for entering a coalition. In other words, in a stable outcome
of the \votein instance, the voting doesn't impose any restrictions.

To ensure that any stable outcome for the \votein instance has only two coalitions we further augment $G$ by introducing two 
  new players, called \emph{supernodes}. 
  Every player $i\in V$ has an edge of weight $W>\sum_{e\in E} |w_e|$ to each of the supernodes. 
  The two supernodes
  are connected by an edge of weight $-M$, where $M> |V|\cdot W$. This enforces that the two 
  supernodes are in a different coalition in any stable outcome.
  Moreover,  by the choice of $W$, each player in $V$ will be in a coalition with one of the supernodes. 
 The fact that edges to supernodes have all the same weight directly implies that a stable outcome for the \votein instance 
 is also a stable outcome for the \nioepa instance.
The claim follows. 
\end{proof}

In contrast to \votein, \voteout is conceptually different. In \voteout a coalition of two players connected 
by a positive edge is vote-out stable. This makes it hard to restrict the number of coaltions. 
Doing this is probably the key for proving \PLS-hardness also for \voteout.
For the following theorem we consider a version of \voteout where the number of coalitions are restricted 
by the problem. 
Let $k$-\voteout be the problem of computing a vote-out stable outcome when at most $k$ coalitions
are allowed. Observe that for any $k\ge 2$ such a vote-out stable outcome exists and that
local improvements starting from any $k$-partition converge to such a stable outcome.
\begin{theorem}
\label{thm:voteout}
$k$-\voteout is \PLS-complete for any voting threshold $0\le T_{out}< 1$
and any $k\ge 2$.
\end{theorem}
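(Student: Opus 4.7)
The plan is to reduce from \pa, which is \plsc~\cite{SY91}. Given a \pa instance represented by an edge-weighted graph $G=(V,E,w)$, I construct a $k$-\voteout instance in two stages, designed so that its stable outcomes correspond exactly to Nash-stable outcomes of the original \pa instance.

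First, to enforce the $k$-coalition restriction, I follow the supernode trick of Observation~\ref{o:NashPLS}: introduce $k$ supernodes $s_1,\ldots,s_k$ pairwise connected by edges of weight $-M$ for sufficiently large $M$, and give each $v\in V$ edges of weight $+W$ to $s_1$ and $s_2$ only, for sufficiently large $W$. As in Observation~\ref{o:NashPLS}, in any stable outcome the $k$ supernodes occupy $k$ distinct coalitions.

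Second, to enforce the vote-out threshold, for each $v\in V$ and each $i\in[k]$ I add $R$ \emph{dummy} nodes $d_v^{i,1},\ldots,d_v^{i,R}$, each connected to $s_i$ by a $+W$ edge and to $v$ by a $-\varepsilon$ edge (any $\varepsilon>0$). I choose $R$ large enough that $R/(\Delta+R+1)\ge T_{out}$, where $\Delta$ is the maximum degree of $G$; this is possible for any $T_{out}<1$ by taking $R\ge T_{out}(\Delta+1)/(1-T_{out})$. The dummies serve three purposes:
\begin{itemize}
\item In any stable outcome, $d_v^{i,j}$ lies in $s_i$'s coalition: its only positive edge is the $+W$ edge to $s_i$, so from any other coalition an escape to $s_i$ is Nash-feasible (the only other utilities available to $d_v^{i,j}$ are $0$ or $-\varepsilon$) and vote-out feasible (either it has no edges in its current coalition, or its only edge there is the negative edge to $v$).
\item Whichever supernode coalition $v$ occupies (say $s_i$'s), exactly the $R$ dummies $d_v^{i,*}$ share $v$'s coalition, contributing a constant $-R\varepsilon$ to $v$'s utility. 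Hence $v$'s preference between $s_1$ and $s_2$ is dictated entirely by the original \pa edges.
\item At least $R$ of $v$'s at most $\Delta+R+1$ coalition edges are negative, so $v$'s vote-out fraction is at least $R/(\Delta+R+1)\ge T_{out}$.
\end{itemize}

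Consequently every Nash-feasible deviation of $v$ is also vote-out feasible, and $v$ strictly prefers any $\{s_1,s_2\}$-coalition (with supernode utility contribution $W$) to any auxiliary coalition $s_i$, $i\ge 3$. Thus stability in $k$-\voteout reduces exactly to Nash stability of the \pa bipartition induced by assigning each $v$ according to whether $v\in p(s_1)$ or $v\in p(s_2)$; the solution map reads this off in polynomial time, and the instance map yields $O(k|V|R)=O(k|V|^2)$ nodes for any fixed $T_{out}<1$, which is polynomial. The main obstacle I expect is preventing stable configurations in which some $v$ is trapped in an auxiliary coalition $s_i$ with $i\ge 3$ (possibly together with other original nodes): placing $R$ dummies at \emph{every} supernode guarantees that $v$ always has enough local negative edges to execute a vote-out-feasible deviation back to $s_1$ or $s_2$, while $W$ being large enough ensures that such a deviation is Nash-feasible, ruling out these problematic configurations.
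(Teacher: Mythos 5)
There is a genuine gap: your supernode-separation step fails under vote-out stability, and this is precisely the obstacle that makes \voteout delicate. Consider the outcome in which $s_1$, $s_2$, all original nodes, and all dummies $d_v^{1,*},d_v^{2,*}$ sit in a single coalition, while $s_3,\dots,s_k$ (with their dummies) occupy other coalitions. Each of $s_1,s_2$ then has exactly one negative coalition edge (the $-M$ edge to the other supernode) against $|V|+|V|R$ positive $+W$ edges to original nodes and dummies, so the fraction of its coalition edges that are negative is $1/(1+|V|+|V|R)$, which is below any fixed $T_{out}>0$ once the instance is large. Hence $s_1$ wants to leave (its utility is negative because $M$ dominates) but has no vote-out feasible deviation; every original node earns $2W$ minus lower-order terms and has no Nash-feasible deviation; every dummy is with its supernode and is content. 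This configuration is therefore a local optimum of the $k$-\voteout instance, yet it maps to the all-on-one-side outcome of \pa, which is not in general a local optimum there — so the map $g$ required by a \PLS-reduction does not exist for your construction. The root cause is the phenomenon the paper flags explicitly: a coalition held together by positive edges is vote-out stable, so a node laden with positive internal edges cannot be forced out no matter how unhappy it is, and your supernodes are exactly such nodes.

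The paper's proof sidesteps supernodes for coalition separation entirely. It relies on the intrinsic cap of $k$ coalitions in the problem definition and adds $s\cdot\Delta(G)$ cliques of $k$ auxiliary nodes each, with all clique edges of weight $-M$ and $-\varepsilon$ edges from every auxiliary node to every original node. Because these auxiliary nodes have \emph{only negative} incident edges, the vote-out condition never blocks them, so in any stable outcome each clique spreads one node per coalition; this simultaneously occupies all $k$ coalitions and plants at least $s\cdot\Delta(G)$ negative edges in every coalition from the viewpoint of every original node, so that (for $s\ge T_{out}/(1-T_{out})$) every Nash-feasible deviation of an original node is automatically vote-out feasible. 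The reduction also routes through \oens rather than general \pa. If you want to salvage your approach, the missing idea is that every node whose mobility your argument depends on must be guaranteed a $T_{out}$-fraction of negative edges inside \emph{whatever} coalition it could get stuck in — a property your supernodes badly violate and which, in the paper, is achieved by making the coalition-structuring gadgets out of purely negatively connected nodes.
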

\begin{proof}
Our reduction is from \oepa, but we first reduce to the intermediate problem \oens, which 
is a restricted version of \ns where each player is only incident to at most one negative edge.
Consider an instance of \oepa which is represented as an edge-weighted graph $G=(V,E,w)$. 
We augment $G$ with two supernodes in exactly the same way as in Theorem~\ref{thm:votein}.
This ensures that any stable outcome of the \oens instance uses only two coalitions and thus is also a stable outcome 
 for the \oepa instance.
Hence, \oens is \plsc.


We now reduce from \oens to $k$-\voteout. Let $G$ be the graph corresponding to an instance of  \oens.
Let $\Delta(G)$ be the maximum degree of a node in~$G$. 
%
We augment $G$ as follows: 
We introduce $s\cdot k\cdot \Delta(G)$ new nodes where $s$ is an integer satisfying 
$s\ge \frac{T_{out}}{1-T_{out}}$. 
Those nodes are organized in $s\cdot \Delta(G)$ complete 
graphs of $k$ nodes each. 
All the edges in the complete graphs have weight $-M$ where $M$ is sufficiently large ($M>|V|\cdot \Delta(G)\cdot\varepsilon$ will do). 
Moreover, we connect every original node $u\in V$ to every new node with an edge of weight $-\varepsilon$,
where $\varepsilon>0$. 

By the choice of $M$ and since at most $k$ coalitions are allowed, in any stable solution there will be one
node from each complete graph in each of the $k$ coalitions. 
This shifts the utility of each player $i\in V$ with respect to each 
coalition by $-s\cdot \Delta(G)\cdot\varepsilon$.
Moreover, every original node has at least $s\cdot \Delta(G)$ negative edges to each coalition. 
Since each node is incident to at most $\Delta(G)$ positive 
edges, it follows that the fraction of negative edges to each coalition is at least $\frac{s}{s+1}\ge T_{out}$.
Thus, in every stable outcome all nodes  $u\in V$ have a $T_{out}$-majority for leaving their coalition. 
This implies that in the corresponding outcome of the \oens instance, no player can improve her utility
by joining one of the $k$ coalitions used in $k$-\voteout. 
Moreover, in every stable outcome the utility of each node $u\in V$ with respect to the 
set of original nodes $V$ is non-negative, 
since $u$ has at most one negative incident edge in the \oens instance and $k\ge 2$. 
It follows that a stable outcome for the $k$-\voteout instance is also a stable outcome for the \oens instance.
The claim follows.
\end{proof}

It is an interesting open problem whether \PLS-completeness also holds if the
restriction on the number of allowed coalitions is dropped. Can we construct a
gadget that imposes this restriction without restricting the problem a priori? 

Since \votein and a restricted version of \voteout are \PLS-complete 
it's interesting to study the combination of both problems. 
What happens if we require vote-in stability \emph{and} vote-out stability? With a mild assumption on
the voting thresholds $T_{in}, T_{out}$, we establish:
\begin{theorem}
\label{thm:voteinout}
For any instance of \voteinout with voting thresholds 
$T_{in}, T_{out} \ge \frac{1}{2}$ and $T_{in}+T_{out}>1$, 
local improvements converge in $\bO(|E|)$ steps.
\end{theorem}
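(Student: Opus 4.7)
The plan is to exhibit a potential function of the form $\Phi = \Phi_{+} - \lambda\,\Phi_{-}$, where $\Phi_{+}$ (resp.\ $\Phi_{-}$) counts the number of positive (resp.\ negative) edges of $G$ with both endpoints inside a single coalition of the current outcome, and $\lambda$ is a constant chosen from the open interval $I := \bigl(\tfrac{1-T_{out}}{T_{out}},\,\tfrac{T_{in}}{1-T_{in}}\bigr)$. Clearing denominators, the inequality $\tfrac{1-T_{out}}{T_{out}} < \tfrac{T_{in}}{1-T_{in}}$ is equivalent to $T_{in}+T_{out}>1$, so $I$ is non-empty exactly under our hypothesis. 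Both $\Phi_+$ and $\Phi_-$ are integers in $[0,|E|]$, so $\Phi$ varies in a range of total length at most $(1+\lambda)|E|$. It therefore suffices to bound the per-step increase of $\Phi$ from below by a positive constant.

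Now consider a vote-in and vote-out feasible deviation of player $i$ from $p(i)$ to a coalition $c$, and let $p,q$ (resp.\ $r,s$) denote the numbers of positive and negative edges from $i$ to $c$ (resp.\ to $p(i)$). A direct computation yields
\begin{equation*}
\Delta\Phi \,=\, (p-\lambda q) \,+\, (\lambda s - r).
\end{equation*}
The vote-in condition $p \ge T_{in}(p+q)$ rearranges to $q \le \tfrac{1-T_{in}}{T_{in}}\,p$, so $p - \lambda q \ge \alpha\,p$ with $\alpha := \tfrac{T_{in}-\lambda(1-T_{in})}{T_{in}} > 0$, since $\lambda < \tfrac{T_{in}}{1-T_{in}}$. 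Symmetrically, the vote-out condition $s \ge T_{out}(r+s)$ gives $\lambda s - r \ge \beta\,s$ with $\beta := \tfrac{\lambda T_{out}-(1-T_{out})}{T_{out}} > 0$, since $\lambda > \tfrac{1-T_{out}}{T_{out}}$. Combining these, $\Delta\Phi \ge \alpha p + \beta s$.

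The final and most delicate step is to argue that $p + s \ge 1$ for every Nash-feasible deviation. Nash feasibility forces $i$'s utility to strictly change, which rules out $p=q=r=s=0$. If $p+q \ge 1$, then the hypothesis $T_{in} \ge \tfrac12$ gives $p \ge T_{in}(p+q) \ge \tfrac12$, and integrality of $p$ forces $p \ge 1$; if instead $p+q=0$, then $r+s \ge 1$, and the analogous argument for vote-out with $T_{out}\ge\tfrac12$ yields $s \ge 1$. Consequently every improving step increases $\Phi$ by at least $\delta := \min(\alpha,\beta) > 0$, a constant depending only on $T_{in}, T_{out}, \lambda$, so the number of improving steps is at most $(1+\lambda)|E|/\delta = \bO(|E|)$. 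The crux is this last integrality rounding: without $T_{in},T_{out}\ge\tfrac12$ one would only obtain $p \ge T_{in}$ and $s \ge T_{out}$, which can be strictly less than $1$ and cannot be rounded up, so the per-step gain of $\Phi$ need no longer be bounded away from zero and the polynomial-step bound would collapse.
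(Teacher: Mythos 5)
Your proof is correct, and it is a genuine refinement of the paper's argument rather than a verbatim reproduction. The paper uses the special case $\lambda=1$ of your potential, i.e.\ $\Phi=\Phi^{+}-\Phi^{-}$: since $T_{out}\ge\frac12$ the leaving side contributes $s-r\ge 0$, since $T_{in}\ge\frac12$ the entering side contributes $p-q\ge 0$, and strictness of one threshold is invoked to conclude $\Delta\Phi\ge 1$ by integrality, giving at most $2|E|$ steps. That argument is slightly incomplete in exactly the edge case you flag implicitly: if, say, $T_{in}>\frac12$ but $T_{out}=\frac12$ and the deviating player has \emph{no} edges to the target coalition while having $r=s\ge 1$ edges in her old one (a Nash-feasible move when the weighted sum there is negative), then $\Delta(\Phi^{+}-\Phi^{-})=0$ and the unweighted potential does not strictly increase. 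Your choice of $\lambda$ strictly inside $I=\bigl(\tfrac{1-T_{out}}{T_{out}},\tfrac{T_{in}}{1-T_{in}}\bigr)$ repairs this: it forces a strictly positive coefficient $\alpha$ (resp.\ $\beta$) on $p$ (resp.\ $s$) on \emph{both} sides simultaneously, and your observation that $p+s\ge 1$ (via $T_{in},T_{out}\ge\frac12$ plus integrality) then yields a per-step gain bounded below by a constant $\delta>0$ depending only on the thresholds. What you lose is the clean ``integer potential, increment at least $1$'' bookkeeping, so your step bound $(1+\lambda)|E|/\delta$ carries a threshold-dependent constant, but it is still $\bO(|E|)$ since $T_{in},T_{out}$ are fixed parameters of the problem. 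The only cosmetic issue is that for $T_{in}=1$ the upper endpoint $\tfrac{T_{in}}{1-T_{in}}$ should be read as $+\infty$; otherwise the argument is airtight and, in the degenerate-threshold case, strictly more careful than the published one.
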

\begin{proof}
For any outcome $p$ define a potential function $\Phi(p)=\Phi^+(p)-\Phi^-(p)$, where $\Phi^+(p)$ (resp. $\Phi^-(p)$) is the number of
positive (resp. negative) internal edges, i.e. edges not crossing coalition boundaries. 
Consider a local improvement of some player $i$ from coalition $p(i)$ to $p'(i)$. 
Since $T_{out}\ge \frac{1}{2}$, player $i$ has at least as many negative as positive edges to $p(i)$. Likewise since $T_{in}\ge \frac{1}{2}$, 
player $i$ has at least as many positive as negative edges to $p'(i)$. So $\Phi(p)$ cannot decrease by a local improvement.
Moreover, since $T_{in}+T_{out}>1$, one of the threshold inequalities must be strict, which implies
$\Phi(p')>\Phi(p)$. The claim follows since $-|E|\le \Phi(p) \le |E|$ and  $\Phi(p)$ is integer.
\end{proof}
Without the assumption on the voting thresholds from Theorem~\ref{thm:voteinout}, 
the complexity of computing stable outcomes is 
an interesting open problem, in particular the case $T_{in}=T_{out}=1/2$.

\section{Open problems}
\label{sec:conc}

In this paper, we studied the computational complexity of 
finding stable outcomes in hedonic games.
We show that \ns is \plsc.
On the other hand we show that \cis, that is finding a stable
outcome where any member of a coalition can block (veto) a 
player from leaving or joining, can be solved in polynomial 
time.
For the case that a player can only block a player from joining,
we show that the corresponding problem \is
is \plsc (Theorem~\ref{thm:is}).
Our reduction to \is uses five coalitions.
On the other hand, \isTwo, where the number of coalitions is restricted to two,
is solvable in polynomial time (Proposition~\ref{p:is2}).
This leaves open the complexity of \isThree and \isFour, where
the number of coalitions is restricted to three or four, respectively.

We then study cases where members of a coalition can vote on whether  
to allow a player to leave or join a coalition.
The problem \votein is parameterized by a voting threshold, $T_{in} \in [0,1]$.
\is can be seen as \votein with $T_{in}=1$.
Theorem 5 shows hardness for $0 \le T_{in} < 1$, so in fact we show that \votein
is \plsc for all voting thresholds.
In contrast, we show that the case of \voteout with $T_{out}=1$ 
is polynomial-time solvable (Proposition~\ref{prop:justgrow}). 
This suggests that \voteout is conceptually different from \votein. 
Indeed, it seems difficult to restrict the coalitions in this case.
We do show that $k$-\voteout, where we restrict the outcome to have at most $k$ coalitions, 
is \plsc for $0\le T_{out}<1$, but we leave 
the complexity of \voteout as an interesting open problem.
 
On the positive side, we show that
local improvements converge in polynomial time in the case of requiring 
\emph{both} vote-in- and vote-out- 
stability with $T_{in},T_{out}\ge 0.5$ and $T_{in}+T_{out}>1$.
We leave open the interesting case of \voteinout with voting thresholds
that do not satisfy 
$T_{in}, T_{out} \ge \frac{1}{2}$ and $T_{in}+T_{out}>1$.
We also leave open the case of finding an outcome that is vote-in 
and sum-out stable.


\citet{ET11} showed that local max cut is \plsc even when the input graph
has degree at most five.
In contrast, \citet{Pol95} gives a polynomial-time
algorithm for graphs with degree at most three.
It would be interesting to study degree restrictions for additively-separable
hedonic games.

\bibliographystyle{abbrvnat}
\bibliography{bib}

\begin{thebibliography}{34}
\providecommand{\natexlab}[1]{#1}
\providecommand{\url}[1]{\texttt{#1}}
\expandafter\ifx\csname urlstyle\endcsname\relax
  \providecommand{\doi}[1]{doi: #1}\else
  \providecommand{\doi}{doi: \begingroup \urlstyle{rm}\Url}\fi

\bibitem[Aarts and Lenstra(1997)]{AL97}
E.~H.~L. Aarts and J.~K. Lenstra.
\newblock \emph{Local Search in Combinatorial Optimization}.
\newblock Wiley-Interscience, 1997.

\bibitem[Aziz and Savani(2015)]{AS14}
H.~Aziz and R.~Savani.
\newblock Hedonic games.
\newblock In \emph{Handbook of Computational Social Choice}. Cambridge
  University Press, 2015.
\newblock In press.

\bibitem[Aziz et~al.(2011)Aziz, Brandt, and Seedig]{ABS11}
H.~Aziz, F.~Brandt, and H.~G. Seedig.
\newblock Stable partitions in additively separable hedonic games.
\newblock In \emph{Proc.\ of the International Conference on Autonomous Agents
  and Multiagent Systems ({AAMAS})}, pages 183--190, 2011.

\bibitem[Balcan et~al.(2009)Balcan, Blum, and Mansour]{BBM09}
M.-F. Balcan, A.~Blum, and Y.~Mansour.
\newblock Improved equilibria via public service advertising.
\newblock In \emph{Proc.\ of the ACM-SIAM Symposium on Discrete Algorithms
  (SODA)}, pages 728--737, 2009.

\bibitem[Ballester(2004)]{Ball04a}
C.~Ballester.
\newblock {NP}-completeness in hedonic games.
\newblock \emph{Games and Economic Behavior}, 49\penalty0 (1):\penalty0 1--30,
  2004.

\bibitem[Bhalgat et~al.(2010)Bhalgat, Chakraborty, and Khanna]{BCK10}
A.~Bhalgat, T.~Chakraborty, and S.~Khanna.
\newblock Approximating pure {N}ash equilibrium in cut, party affiliation and
  satisfiability games.
\newblock In \emph{Proc.\ of the ACM Conference in Electronic Commerce (EC)},
  pages 132--146, 2010.

\bibitem[Bogomolnaia and Jackson(2002)]{BoJa02a}
A.~Bogomolnaia and M.~O. Jackson.
\newblock The stability of hedonic coalition structures.
\newblock \emph{Games and Economic Behavior}, 38\penalty0 (2):\penalty0
  201--230, 2002.

\bibitem[Branzei and Larson(2009)]{BL09}
S.~Branzei and K.~Larson.
\newblock Coalitional affinity games and the stability gap.
\newblock In \emph{Proc.\ of the International Joint Conference on Artificial
  Intelligence (IJCAI)}, pages 1319--1320, 2009.

\bibitem[Burani and Zwicker(2003)]{BuZw03a}
N.~Burani and W.~S. Zwicker.
\newblock Coalition formation games with separable preferences.
\newblock \emph{Mathematical Social Sciences}, 45\penalty0 (1):\penalty0
  27--52, 2003.

\bibitem[Cechl{\'a}rov{\'a}(2008)]{Cech08a}
K.~Cechl{\'a}rov{\'a}.
\newblock Stable partition problem.
\newblock In \emph{Encyclopedia of Algorithms}. Springer, 2008.

\bibitem[Christodoulou et~al.(2006)Christodoulou, Mirrokni, and
  Sidiropoulos]{CMS06}
G.~Christodoulou, V.~S. Mirrokni, and A.~Sidiropoulos.
\newblock Convergence and approximation in potential games.
\newblock In \emph{Proc.\ of the Symposium on Theoretical Aspects of Computer
  Science (STACS)}, pages 349--360, 2006.

\bibitem[Deng and Papadimitriou(1994)]{DePa94a}
X.~Deng and C.~H. Papadimitriou.
\newblock On the complexity of cooperative solution concepts.
\newblock \emph{Mathematics of Operations Research}, 12\penalty0 (2):\penalty0
  257--266, 1994.

\bibitem[Dreze and Greenberg(1980)]{DrezeGreenberg80a}
J.~H. Dreze and J.~Greenberg.
\newblock Hedonic coalitions: Optimality and stability.
\newblock \emph{Econometrica}, 48\penalty0 (4):\penalty0 987--1003, 1980.

\bibitem[Elkind and Pasechnik(2009)]{EP09}
E.~Elkind and D.~Pasechnik.
\newblock Computing the nucleolus of weighted voting games.
\newblock In \emph{Proc.\ of the ACM-SIAM Symposium on Discrete Algorithms
  (SODA)}, pages 327--335, 2009.

\bibitem[Elkind and Wooldridge(2009)]{ElWo09a}
E.~Elkind and M.~Wooldridge.
\newblock Hedonic coalition nets.
\newblock In \emph{Proc.\ of the International Conference on Autonomous Agents
  and Multiagent Systems (AAMAS)}, pages 417--424, 2009.

\bibitem[Elkind et~al.(2009)Elkind, Goldberg, Goldberg, and Wooldridge]{EGGW09}
E.~Elkind, L.~A. Goldberg, P.~W. Goldberg, and M.~Wooldridge.
\newblock On the computational complexity of weighted voting games.
\newblock \emph{Ann. Math. Artif. Intell.}, 56\penalty0 (2):\penalty0 109--131,
  2009.

\bibitem[Els{\"{a}}sser and Tscheuschner(2011)]{ET11}
R.~Els{\"{a}}sser and T.~Tscheuschner.
\newblock Settling the complexity of local max-cut (almost) completely.
\newblock In \emph{Proc.\ of the International Colloquium on Automata,
  Languages and Programming (ICALP)}, pages 171--182, 2011.

\bibitem[Fabrikant et~al.(2004)Fabrikant, Papadimitriou, and Talwar]{FPT04}
A.~Fabrikant, C.~H. Papadimitriou, and K.~Talwar.
\newblock {T}he {C}omplexity of {P}ure {N}ash {E}quilibria.
\newblock In \emph{Proc.\ of the ACM Symposium on Theory of Computing (STOC)},
  pages 604--612, 2004.

\bibitem[Gairing and Savani(2010)]{GS10}
M.~Gairing and R.~Savani.
\newblock Computing stable outcomes in hedonic games.
\newblock In \emph{Proc.\ of the International Symposium on Algorithmic Game
  Theory (SAGT)}, 2010.

\bibitem[Gairing and Savani(2011)]{GS11}
M.~Gairing and R.~Savani.
\newblock Computing stable outcomes in hedonic games with voting-based
  deviations.
\newblock In \emph{Proc. of the International conference on Autonomous Agents
  and Multi-Agent Systems ({AAMAS})}, pages 559--566, 2011.

\bibitem[Greenberg(1994)]{Gree94a}
J.~Greenberg.
\newblock Coalition structures.
\newblock In R.~J. Aumann and S.~Hart, editors, \emph{Handbook of Game Theory
  with Economic Applications}, volume~II. Elsevier, 1994.

\bibitem[Ieong and Shoham(2005)]{IS05}
S.~Ieong and Y.~Shoham.
\newblock Marginal contribution nets: A compact representation scheme for
  coalitional games.
\newblock In \emph{Proc.\ of the ACM Conference on Electronic Commerce (EC)},
  page 193, 2005.

\bibitem[Johnson et~al.(1988)Johnson, Papadimitriou, and Yannakakis]{JPY88}
D.~S. Johnson, C.~H. Papadimitriou, and M.~Yannakakis.
\newblock {H}ow {E}asy is {L}ocal {S}earch?
\newblock \emph{Journal of Computer and System Sciences}, 37\penalty0
  (1):\penalty0 79--100, 1988.

\bibitem[Krentel(1989)]{Kre89}
M.~W. Krentel.
\newblock {S}tructure in {L}ocally {O}ptimal {S}olutions.
\newblock In \emph{Proc.\ of the Symposium on Foundations of Computer Science
  (FOCS)}, pages 216--221, 1989.

\bibitem[Monien and {Tscheuschner}(2010)]{MT10}
B.~Monien and T.~{Tscheuschner}.
\newblock On the power of nodes of degree four in the local max-cut problem.
\newblock In \emph{Proc.\ of the International Conference on Algorithms and
  Complexity (CIAC)}, 2010.

\bibitem[Monien et~al.(2010)Monien, Dumrauf, and Tscheuschner]{MDT10}
B.~Monien, D.~Dumrauf, and T.~Tscheuschner.
\newblock Local search: Simple, successful, but sometimes sluggish.
\newblock In \emph{Proc.\ of the International Colloquium on Automata,
  Languages, and Programming (ICALP)}, 2010.

\bibitem[Orlin et~al.(2004)Orlin, Punnen, and Schulz]{OPS04}
J.~B. Orlin, A.~P. Punnen, and A.~S. Schulz.
\newblock Approximate local search in combinatorial optimization.
\newblock \emph{SIAM Journal on Computing}, 33\penalty0 (5):\penalty0
  1201--1214, 2004.

\bibitem[Peters and Elkind(2015)]{PE15}
D.~Peters and E.~Elkind.
\newblock Simple causes of complexity in hedonic games.
\newblock In \emph{Proc.\ of the International Conference on Artificial
  Intelligence (IJCAI)}, pages 617--623, 2015.

\bibitem[Poljak(1995)]{Pol95}
S.~Poljak.
\newblock Integer linear programs and local search for max-cut.
\newblock \emph{SIAM Journal on Computing}, 21\penalty0 (3):\penalty0 450--465,
  1995.

\bibitem[Sch\"affer and Yannakakis(1991)]{SY91}
A.~A. Sch\"affer and M.~Yannakakis.
\newblock {S}imple {L}ocal {S}earch {P}roblems that are {H}ard to {S}olve.
\newblock \emph{SIAM Journal of Computing}, 20\penalty0 (1):\penalty0 56--87,
  1991.

\bibitem[Sipser(2006)]{Sip06}
M.~Sipser.
\newblock \emph{Introduction to the Theory of Computation}.
\newblock Thomson, 2006.

\bibitem[Sung and Dimitrov(2007)]{SD07}
S.~C. Sung and D.~Dimitrov.
\newblock On myopic stability concepts for hedonic games.
\newblock \emph{Theory and Decision}, 62, 2007.

\bibitem[Sung and Dimitrov(2010)]{SuDi10a}
S.~C. Sung and D.~Dimitrov.
\newblock Computational complexity in additive hedonic games.
\newblock \emph{European Journal of Operational Research}, 203\penalty0
  (3):\penalty0 635--639, 2010.

\bibitem[Yannakakis(2008)]{Yan08}
M.~Yannakakis.
\newblock Equilibria, fixed points, and complexity classes.
\newblock In \emph{Proc.\ of the International Symposium on Theoretical Aspects
  of Computer Science (STACS)}, pages 19--38, 2008.

\end{thebibliography}

\end{document}